\newtheorem{definition}{Definition}[]
\newtheorem{theorem}{Theorem}[section]
\newtheorem{lemma}[theorem]{Lemma}
\newtheorem{fact}{Fact}
\newcommand{\Oh}{{\ensuremath{\mathcal{O}}}}
\newcommand{\Z}{\mathbb{Z}}
\newcommand{\F}{\mathbb{F}}
\newcommand{\anonymyze}[1]{}%
\newcommand{\ignore}[1]{}%
\newcommand{\DeepestDnCut}[1][]{{\rm DeepestDnCut}}
\newcommand{\DeepestDnCutNoMin}[1][]{{\rm DeepestDnCutNoMin}}
\newcommand{\DeepestDnCutNoMax}[1][]{{\rm DeepestDnCutNoMax}}
\newcommand{\LCA}[1][]{{\rm LCA}}
\newcommand{\td}{\mathrm{td}}
\newcommand{\tw}{\mathrm{tw}}
\newcommand{\tail}{\mathsf{tail}}
\newcommand{\tree}{\mathsf{tree}}
\newcommand{\chld}{\mathsf{chld}}
\newcommand{\comp}{\mathsf{comp}}
\newcommand{\depth}{\mathsf{depth}}
\newcommand{\cl}{\mathsf{cl}}
\newcommand{\im}{\mathrm{im}}
\newcommand{\LICZ}{\mathtt{CountElimTrees}}
\newcommand{\LICZLAS}{\mathtt{CountElimForest}}
\newcommand{\KONSTRUUJ}{\mathtt{ConstructElimForest}}
\newcommand{\ODZYSKAJ}{\KONSTRUUJ}
\newcommand{\ol}[1]{\overline{#1}}
\renewcommand{\setminus}{-}
\renewcommand{\leq}{\leqslant}
\renewcommand{\le}{\leqslant}
\renewcommand{\ge}{\geqslant}
\begin{document}

\title{Computing treedepth in polynomial space and linear fpt time 
	\thanks{This work is a part of projects that have received funding
from the European Research Council (ERC) under the European Union’s Horizon 2020 research and innovation
programme (grant agreements No. 714704 --- {\sc{CUTACOMBS}} and No. 948057 --- {\sc{BOBR}}).}
%	\thanks{This research is a part of projects that have received funding from the European Research Council (ERC)
%		under the European Union's Horizon 2020 research and innovation programme
%		(Grant Agreement 714704 (W.~Nadara), 677651 (M.~Smulewicz), and 948057 (M.~Sokołowski)).
%		}
	}

\date{April 21, 2022}

\author{
%	Anonymous authors
%% 	Marcin Mucha\thanks{Institute of Informatics, University of Warsaw, Poland (\texttt{M.Mucha@mimuw.edu.pl})} \and 
	Wojciech Nadara\thanks{Institute of Informatics, University of Warsaw, Poland (\texttt{w.nadara@mimuw.edu.pl})} 
%% 	\and
%% 	Marcin Pilipczuk\thanks{Institute of Informatics, University of Warsaw, Poland (\texttt{malcin@mimuw.edu.pl})} 
	\and Michał Pilipczuk\thanks{Institute of Informatics, University of Warsaw, Poland
	(\texttt{michal.pilipczuk@mimuw.edu.pl})} \and
	Marcin Smulewicz\thanks{Institute of Informatics, University of Warsaw, Poland (\texttt{m.smulewicz@mimuw.edu.pl})}
}

\maketitle

\begin{textblock}{20}(0, 13.0)
	\includegraphics[width=40px]{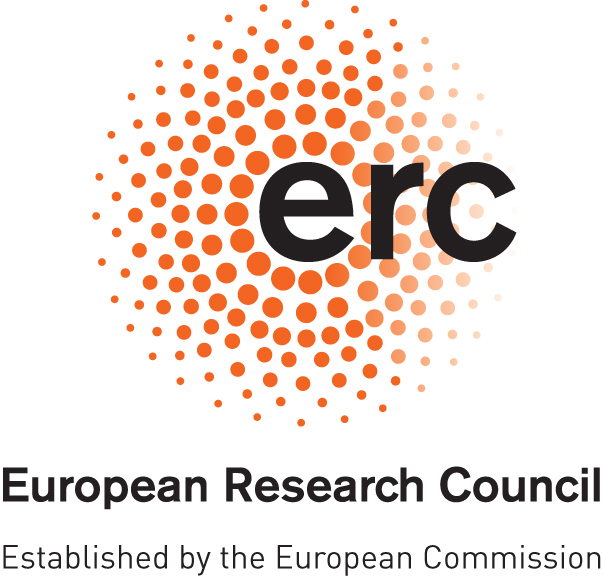}%
\end{textblock}
\begin{textblock}{20}(-0.25, 13.4)
	\includegraphics[width=60px]{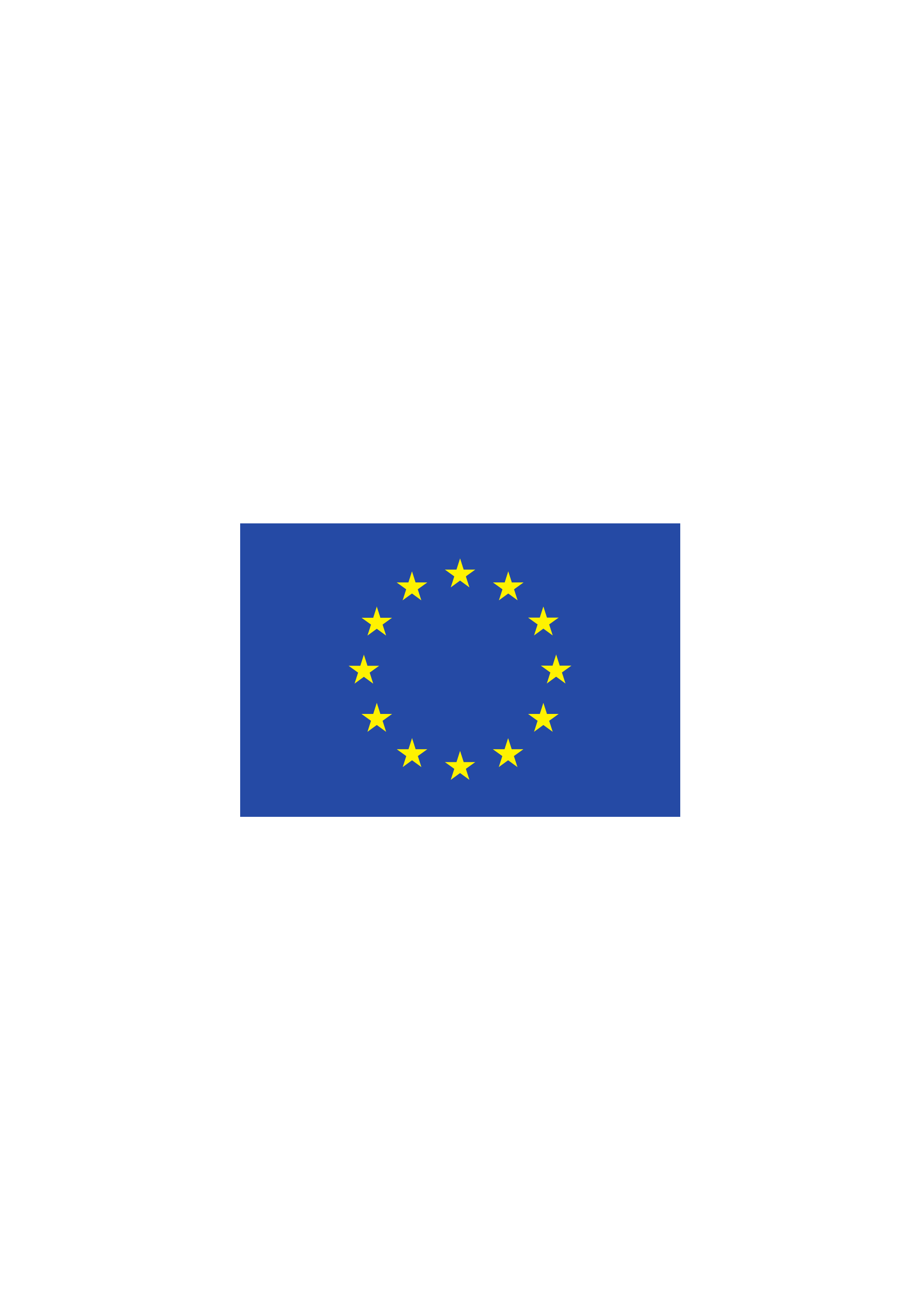}%
\end{textblock}

\begin{abstract}
The {\em{treedepth}} of a graph $G$ is the least possible depth of an {\em{elimination forest}} of $G$: a rooted forest on the same vertex set where every pair of vertices adjacent in $G$ is bound by the ancestor/descendant relation.
We propose an algorithm that given a graph $G$ and an integer $d$, either finds an elimination forest of $G$ of depth at most $d$ or concludes that no such forest exists; thus the algorithm decides whether the treedepth of $G$ is at most~$d$. The running time is $2^{\Oh(d^2)}\cdot n^{\Oh(1)}$ and the space usage is polynomial in $n$. Further, by allowing randomization, the time and space complexities can be improved to $2^{\Oh(d^2)}\cdot n$ and $d^{\Oh(1)}\cdot n$, respectively. This improves upon the algorithm of Reidl et al.~[ICALP 2014], which also has time complexity $2^{\Oh(d^2)}\cdot n$, but uses exponential~space.

\end{abstract}

\newpage

\section{Introduction}
An {\em{elimination forest}} of a graph $G$ is a rooted forest $F$ whose vertex set is the same as that of $G$, where for every edge $uv$ of $G$, either $u$ is an ancestor of $v$ in $F$ or vice versa. The {\em{treedepth}} of $G$ is the least possible depth of an elimination forest of $G$.
Compared to the better-known parameter {\em{treewidth}}, treedepth measures the depth of a tree-like decomposition of a graph, instead of width. The two parameters are related: if by $\td(G)$ and $\tw(G)$ we denote the treedepth and the treewidth of an $n$-vertex graph $G$, then it always holds that $\tw(G)\leq \td(G)\leq \tw(G)\cdot \log_2 n$. However, the two notions are qualitatively different: for instance, a path on $t$ vertices has treewidth $1$ and treedepth $\Theta(\log t)$.  

Treedepth appears prominently in structural graph theory, especially in the theory of sparse graphs of Ne\v{s}et\v{r}il and Ossona de Mendez. There, it serves as a basic building block for fundamental decompositions of sparse graphs --- {\em{low treedepth colorings}} --- which can be used for multiple algorithmic purposes, including designing algorithms for {\sc{Subgraph Isomorphism}} and model-checking First-Order logic. See~\cite[Chapters~6 and~7]{sparsity} for an introduction and~\cite{GajarskyKNMPST20,GroheK09,NesetrilM08a,NesetrilM15,NesetrilM16,OS17,PilipczukST18,PilipczukS21} for examples of applications.

In this work we are interested in using treedepth as a parameter for the design of fixed-parameter (fpt) algorithms. Clearly, every dynamic programming algorithm working on a tree decomposition of a graph can be adjusted to work also on an elimination forest, just because an elimination forest of depth $d$ can be easily transformed into a tree decomposition of width $d-1$. However, it has been observed in~\cite{FurerY14,PiWrochna,HegerfeldK20,NederlofPSW20,PilipczukS21} that for multiple basic problems, 
one can design fpt algorithms working on elimination forests of bounded depth that have polynomial space complexity without sacrificing on the time complexity. These include the following: (In all results below, $n$ is the vertex count and $d$ is the depth of the given elimination forest.)
\begin{itemize}[nosep]
 \item A $3^d\cdot n^{\Oh(1)}$-time $\Oh(d+\log n)$-space algorithm for {\sc{3-Coloring}}~\cite{PiWrochna}.
 \item A $2^d\cdot n^{\Oh(1)}$-time $n^{\Oh(1)}$-space algorithm for counting perfect matchings~\cite{FurerY14}.
 \item A $3^d\cdot n^{\Oh(1)}$-time $n^{\Oh(1)}$-space algorithm for {\sc{Dominating Set}}~\cite{FurerY14,PiWrochna}.
 \item A $d^{|V(H)|}\cdot n^{\Oh(1)}$-time $n^{\Oh(1)}$-space algorithm for {\sc{Subgraph Isomorphism}}~\cite{PilipczukS21}.\\ (Here, $H$ is the sought pattern graph.)
 \item A $3^d\cdot n^{\Oh(1)}$-time $n^{\Oh(1)}$-space algorithm for {\sc{Connected Vertex Cover}}~\cite{HegerfeldK20}.
 \item A $5^d\cdot n^{\Oh(1)}$-time $n^{\Oh(1)}$-space algorithm for {\sc{Hamiltonian Cycle}}~\cite{NederlofPSW20}.
\end{itemize}
We note that the approach used in~\cite{HegerfeldK20,NederlofPSW20} to obtain the last two results applies also to several other problems with connectivity constraints. However, as these algorithms are based on the Cut\&Count technique~\cite{CyganNPPRW11}, they are randomized and no derandomization preserving the polynomial space complexity is known. An in-depth complexity-theoretical analysis of the time-space tradeoffs for algorithms working on different graph decompositions can be found~in~\cite{PiWrochna}.

In the algorithms mentioned above one assumes that the input graph is supplied with an elimination depth of depth at most $d$. Therefore, it is imperative to design algorithms that given the graph alone, computes, possibly approximately, such an elimination forest. Compared to the setting of treewidth and tree decompositions, where multiple approaches have been proposed over years (see e.g.~\cite{Tw5Apx,Korhonen21} for an overview), so far there is only a handful of algorithms to compute the treedepth exactly or approximately.
\begin{itemize}[nosep]
 \item It is well-known (see e.g.~\cite[Section~6.2]{sparsity}) that just running depth-first search and outputing the forest of recursive calls gives an elimination forest of depth at most $2^{\td(G)}$. So this gives a very simple linear-time approximation algorithm, but with the approximation factor exponential in the optimum.
 \item Czerwi\'nski et al.~\cite{TDESA} gave a polynomial-time algorithm that outputs an elimination forest of depth at most $\Oh(\td(G)\tw(G)\log^{3/2} \tw(G))$, which is thus an $\Oh(\tw(G)\log^{3/2} \tw(G))$-approximation algorithm. Recall here that $\tw(G)\leq \td(G)$.
 \item Reidl et al.~\cite{Reidl} gave an exact fpt algorithm that in time $2^{\Oh(d^2)}\cdot n$ either constructs an elimination forest of depth at most $d$, or concludes that the treedepth is larger than $d$. 
\end{itemize}
In particular, obtaining a constant-factor approximation for treedepth running in time $2^{\Oh(\td(G))}\cdot n^{\Oh(1)}$ is a well-known open problem, see e.g.~\cite{TDESA}. We note that implementation of practical fpt algorithms for computing treedepth was the topic of the 2020 Parameterized Algorithms and Computational Experiments (PACE) Challenge~\cite{KowalikMNPSW20}.

\paragraph*{Our contribution.}
The exact algorithm of Reidl et al.~\cite{Reidl} uses not only exponential time (in the treedepth), but also exponential space. This would make it a space bottleneck when applied in combination with any of the polynomial-space algorithms developed in~\cite{FurerY14,PiWrochna,HegerfeldK20,NederlofPSW20,PilipczukS21}. In this work we bridge this issue by proving the following result.

\begin{theorem}\label{thm:main}
 There is an algorithm that given an $n$-vertex graph $G$ and an integer~$d$, either constructs an elimination forest of $G$ of depth at most $d$, or concludes that the treedepth of $G$ is larger than $d$. The algorithm runs in $2^{\Oh(d^2)}\cdot n^{\Oh(1)}$ time and uses $n^{\Oh(1)}$ space. 
 
 The space and time complexities can be improved to $d^{\Oh(1)}\cdot n$ and expected $2^{\Oh(d^2)}\cdot n$, respectively, at the cost of allowing randomization: the algorithm may return a false negative with probability at most $\frac{1}{c\cdot n^c}$, where $c$ is any constant fixed a~priori; there are no false positives. 
\end{theorem}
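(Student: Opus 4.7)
The plan is to replace the dynamic programming of Reidl et al., which stores exponentially many profiles, with a recursive branching algorithm that explores candidate elimination forests in a depth-first manner. Each recursive call processes a connected subgraph $H$ together with a depth budget and a context inherited from higher levels, and branches over a bounded number of candidates for the top portion of the elimination forest of $H$; once a branch is explored, it is discarded, so only the current recursion path resides in memory.

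First, I would isolate a structural notion of the top-level \emph{signature} of an elimination forest: a compact description, of size $\poly(d)$, such that the family of admissible signatures per subproblem has cardinality at most $2^{\Oh(d)}$ and can be enumerated in polynomial time. The key structural claim to establish is that if an elimination forest of $H$ of depth at most $d'$ exists, then one exists whose top signature belongs to this enumerable family. I expect the proof of this claim to require a canonicalization argument, for example exploiting the fact that in a \emph{tight} elimination forest the highest internal vertices are essentially determined by the LCA closure of a small set of ``pivot'' vertices, so that fixing $\Oh(d)$ pivots in $H$ determines the top of the forest up to $2^{\Oh(d)}$ combinatorial choices.

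Given such signatures, the algorithm branches: for each candidate it removes the associated vertices, splits the remainder into connected components, and recurses with the depth budget decreased by the number of newly introduced levels. Since the recursion depth is at most $d$ and each node branches into $2^{\Oh(d)}$ candidates, the recursion tree has $2^{\Oh(d^2)}$ leaves, each processed in $n^{\Oh(1)}$ time, giving the claimed time bound. Polynomial space follows because along any root-to-leaf path we store only $\Oh(d)$ contexts and one active signature per level, each of polynomial size. The main obstacle will be controlling the branching factor: enumerating all possible roots would give a factor of $n$ per level, so one must argue that a carefully restricted family of $2^{\Oh(d)}$ signatures suffices and, symmetrically, that no optimal elimination forest is missed.

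For the randomized variant with $d^{\Oh(1)}\cdot n$ space, the extra challenge is avoiding any auxiliary storage scaling with $n$ beyond a single copy of the graph and the elimination forest under construction. The natural approach is to replace deterministic enumeration of candidate signatures at each level by randomized guessing --- for instance, color-coding the pivot vertices so that the correct $\Oh(d)$-tuple becomes monochromatic with probability $2^{-\Oh(d)}$, or sampling pivots according to a carefully chosen distribution. Then each recursion path needs only $d^{\Oh(1)}$ extra space for the active partial signature. Amplifying by $n^{\Oh(1)}$ independent repetitions drives the one-sided error below $\frac{1}{c\cdot n^c}$ while keeping the expected running time at $2^{\Oh(d^2)}\cdot n$, since each repetition runs in the same time as the deterministic branching.
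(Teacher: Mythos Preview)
Your proposal has a genuine gap at its core: the ``signature'' notion is left undefined, and the claim that only $2^{\Oh(d)}$ top-level signatures per subproblem need to be enumerated is asserted without any supporting argument. This is not a detail to be filled in later --- it is the entire content of the result. In fact, even the number of feasible \emph{roots} of an optimum-depth elimination tree of a connected graph is only known to be bounded by $d^{\Oh(d)}$ (via bounds on minimal obstructions), and identifying which vertices are candidates already requires nontrivial work; there is no known structural lemma that restricts the top of an optimal elimination forest to $2^{\Oh(d)}$ combinatorial choices determined by local information in $H$ alone. Your appeal to ``LCA closures of pivot vertices'' and ``tight'' forests does not lead anywhere concrete: without an auxiliary decomposition to anchor the pivots, you are back to branching over $\Omega(n)$ possibilities per level.

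The paper's approach is fundamentally different and bypasses this obstacle. It first obtains, via iterative compression, an auxiliary elimination tree $T$ of depth $d{+}1$, and then \emph{counts} elimination trees of depth at most $d$ by a recursion along $T$. The crucial algebraic idea is to relax injectivity: one counts ``generalized'' elimination trees (mappings into a target tree that need not be bijective), tracking the number of collisions via a formal variable $x$, and uses inclusion-exclusion at each step so that non-injective objects cancel. Because injectivity is not enforced during aggregation, combining children becomes a simple polynomial multiplication rather than a subset-convolution, and this is what yields polynomial space. The linear-time randomized variant then replaces iterative compression by Bodlaender's recursive contraction scheme, replaces self-reducibility by a weighted-counting plus color-coding trick that isolates a feasible root, and hashes all coefficients modulo a random prime. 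None of these ingredients --- the auxiliary tree $T$, the algebraic relaxation of injectivity, or the polynomial bookkeeping --- appear in your outline, and without them the branching you describe cannot be made to work within the stated bounds.
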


Thus, the randomized variant of the algorithm of Theorem~\ref{thm:main} has the same time complexity as the algorithm of Reidl et al.~\cite{Reidl}, but uses polynomial space. However, the algorithm of Reidl et al.~\cite{Reidl} is deterministic, contrary to ours. Note that apart from possible false negatives, the bound on the running time is only in expectation and not worst-case (in other words, our algorithm is both Monte Carlo and Las Vegas). However, one can turn this into a worst-case bound at the cost of increasing the probability of false negatives to $1/2$ by forcefully terminating the execution if the algorithm runs for twice as long as expected.

Simultaneously achieving time complexity linear in $n$ and polynomial space complexity is a property that is desired from an algorithm for computing the treedepth of a graph. While many of the polynomial-space  fpt algorithms working on elimination forests do not have time complexity linear in $n$ due to the usage of various algebraic techniques, the simplest ones that exploit only recursion --- like the ones for {\sc{3-Coloring}} or {\sc{Independent Set}} considered in~\cite{PiWrochna} --- can be easily implemented to run in time $2^{\Oh(d)}\cdot n$ and space $d^{\Oh(1)}\cdot n$. Thus, the randomized variant of the algorithm of Theorem~\ref{thm:main} would neither be a bottleneck from the point of view of space complexity nor from the point of view of the dependency of the running time on $n$. Admittedly, the parametric factor in the runtime of our algorithm is $2^{\Oh(d^2)}$, as compared to $2^{\Oh(d)}$ in most of the aforementioned polynomial-space fpt algorithms working on elimination forests; this brings us back to the open problem about constant-factor approximation for treedepth running in time $2^{\Oh(\td(G))}\cdot n^{\Oh(1)}$ raised in~\cite{TDESA}.

Let us briefly discuss the techniques behind the proof of Theorem~\ref{thm:main}. The algorithm of Reidl et al.~\cite{Reidl} starts by approximating the treewidth of the graph (which is upper bounded by the treedepth) and tries to constructs an elimination forest of depth at most $d$ by bottom-up dynamic programming on the obtained tree decomposition. By applying the iterative compression technique, we may instead assume that we are supplied with an elimination forest of depth at most $d+1$, and the task is to construct one of depth at most $d$. 

Applying now the approach of Reidl et al.~\cite{Reidl} directly (that is, after a suitable adjustment from the setting of tree decompositions to the setting of elimination forests) would not give an algorithm with polynomial space complexity. The reason is that their dynamic programming procedure is quite involved and in particular keeps track of certain disjointness conditions; this is a feature that is notoriously difficult to achieve using only polynomial space. Therefore, we resort to the technique of {\em{inclusion-exclusion branching}}, used in previous polynomial-space algorithms working on elimination forests; see~\cite{FurerY14,PiWrochna} for basic applications of this approach. In a nutshell, the idea is to count more general objects where the disjointness contraints are relaxed, and to use inclusion-exclusion at each step of the computation to make sure that objects not satisfying the constraints eventually cancel out. We note that while the application of inclusion-exclusion branching was rather simple in~\cite{FurerY14,PiWrochna}, in our case it poses a considerable technical challenge. In particular, along the way we do not count single values, but rather polynomials with one formal variable that keeps track of how much the disjointness constraints are violated. In the exposition layer, our application of inclusion-exclusion branching mostly follows the algorithm for {\sc{Dominating Set}} of Pilipczuk and Wrochna~\cite{PiWrochna}.

In this way, we can count the number of elimination forests\footnote{Formally, we count only elimination forests satisfying some basic connectivity property, which we call {\em{sensibility}}.} of depth at most $d$ in time $2^{\Oh(d^2)}\cdot n^{\Oh(1)}$ and using polynomial space. So in particular, we can decide whether there exists at least one such elimination forest. Such a decision algorithm can be quite easily turned into a construction algorithm using self-reducibility of the problem. This establishes the first part of Theorem~\ref{thm:main}.

As for the second part --- the randomized linear-time fpt algorithm using polynomial space --- there are several obstacles that need to be overcome. First, there is a multiplicative factor $n$ in the running time coming from the iterative compression scheme. We mitigate this issue by replacing iterative compression with the recursive contraction scheme used by Bodlaender in his linear-time fpt algorithm to compute the treewidth of a graph~\cite{Bodlaender}. Second, when using self-reducibility, we may apply the decision procedure $n$ times, each taking at least linear time. This is replaced by an approach based on color coding, whose correctness relies on the fact that in a connected graph of treedepth at most $d$ there are at most $d^{\Oh(d)}$ different feasible candidates for the root of an optimum-depth elimination tree~\cite{Obstructions2}. Finally, in the counting procedure we may operate on numbers of bitsize as large as polynomial in $n$. This is resolved by hashing them modulo a random prime of magnitude $\Theta(\log n)$, so that we may assume that arithmetic operations take unit time.

We remark that it is relatively rare that a polynomial-space algorithm based on algebraic techniques can be also implemented so that it runs in time linear in the input size. Therefore, we find it interesting and somewhat surprising that this can be achieved for the problem of computing the treedepth of a graph, which combinatorially is rather involved.

\paragraph*{Organization.} After brief preliminaries in Section~\ref{prelims}, in Section~\ref{main} we prove the first part of Theorem~\ref{thm:main}: we give a deterministic algorithm that runs in time $2^{\Oh(d^2)}\cdot n^{\Oh(1)}$ time and uses polynomial space. Then, in Section~\ref{linear} we improve the time and space complexities to $2^{\Oh(d^2)}n$ and $d^{\Oh(1)}n$ respectively, at the cost of introducing randomization.

%Section~\ref{concl} concludes the paper and mentions some related open problems.

\section{Preliminaries}\label{prelims}
\newcommand{\Anc}{\mathsf{Anc}}

\paragraph*{Standard notation.}
All graphs in this paper are finite, undirected, and simple (i.e. with no loops on vertices or multiple edges with the same endpoints). For a graph $G$ and a vertex subset $A\subseteq V(G)$, by $N_G[A]$ we denote the {\em{closed neighborhood}} of $A$: the set consisting of all vertices that are in $A$ or have a neighbor in $A$.

For a function $f \colon A \to B$ and a subset of the domain $X\subseteq A$, by $f(X)$ we denote the image of $f$ on $X$. The image of $f$ is denoted $\im(f)=f(A)$. For an element $e$ outside of the domain and a value $\alpha$, by $f[e\to \alpha]$ we denote the extension of $f$ obtained by additionally mapping $e$ to $\alpha$.

We denote the set $\{1, 2, \ldots, k\}$ as $[k]$. We assume the standard word RAM model of computation with words of length $\log n$, where $n$ is the vertex count of the input graph.

\paragraph*{(Elimination) forests and treedepth.}
Consider a rooted forest $F$. By $\Anc_F$ we denote the ancestor/descendant relation in $F$: for $u,v\in V(F)$, $\Anc_F(u, v)$ holds if and only if $u$ is an ancestor of $v$ or $v$ is an ancestor of $u$ in $F$. We assume that a vertex is an ancestor of itself, so in particular $\Anc_F(u, u)$ is always true. We also use the following notation.
For $u \in V(F)$, by $\tail_F[u]$ we denote the set of vertices all ancestors of $u$ (including $u$) and by $\tree_F[u]$ we denote the set of all descendants of $u$, including $u$. 
Further, let $\tail_F(u) = \tail_F[u]\setminus \{u\}$, $\tree_F(u) = \tree_F[u] \setminus \{u\}$, and $\comp_F[u]=\tail_F[u] \cup \tree_F[u]$. Note that $v \in \comp_F[u]$ if and only if $\Anc_F(u, v)$ holds. By $\chld_F(u)$ we denote the set of children of $u$ in $F$, and by $\depth_F(u)$ we denote the depth of $u$ in $F$, that is, $\depth_F(u)=|\tail_F[u]|$ (in particular, roots have depth one). The {\em{depth}} of a rooted forest $F$ is the maximum $\depth_F$ among its vertices. For a set of vertices $A\subseteq F$, by $\cl_F(A)=\bigcup_{u\in A} \tail_F[u]$ we denote the ancestor closure of $A$. A {\em{prefix}} of a rooted forest $F$ is a rooted forest induced by some ancestor-closed set $A\subseteq V(F)$; that is, it is the forest on $A$ with the parent-child relation inherited from $F$.

In this paper we are mostly interested in the notion of an elimination forest and of the treedepth of a graph.

\begin{definition}
	An {\em{elimination forest}} of a graph $G$ is a rooted forest $F$ on the same set of vertices as $G$ such that for every edge $uv \in E(G)$, we have that $\Anc_F(u, v)$ holds. 	The {\em{treedepth}} of a graph $G$ is the least possible depth of an elimination forest of $G$.

\end{definition}

Note that an elimination forest of a connected graph must be connected as well, so in this case we may speak about an {\em{elimination tree}}. Sometimes, instead of identifying $V(G)$ and $V(F)$, we treat them as disjoint sets and additionally provide a bijective mapping $\phi \colon V(G) \to V(F)$ such that $uv \in E(G)$ entails $\Anc_F(\phi(u), \phi(v))$. In such case we consider the pair $(F, \phi)$ to be an elimination forest of $G$. This will be always clear from the context. More generally, for $B\subseteq V(G)$ and a rooted forest $F$, we shall say that a mapping $\phi\colon B\to V(F)$ {\em{respects edges}} if $uv\in E(G)$ entails $\Anc_F(u,v)$ for all $u,v\in B$. In this notation, $(F,\phi)$ is an elimination forest of $G$ if and only if $\phi$ is a bijection from $V(G)$ to $V(F)$ that respects edges on $V(G)$.

\section{Deterministic fpt algorithm} \label{main}
In this section we prove the first part of Theorem~\ref{thm:main}: we give a deterministic polynomial-space algorithm with running time $2^{\Oh(d^2)}\cdot n^{\Oh(1)}$ that for a given $n$-vertex graph $G$, either outputs an elimination forest of $G$ of depth at most $d$ or concludes that no such forest exists. 
The most complex part of the algorithm will be procedure  $\LICZ$, which, roughly speaking, counts the number of different elimination trees of a connected graph $G$ of depth at most $d$. 
We describe $\LICZ$ first, and then we utilize it to achieve the main result.

% The main result of this section is an algorithm that computes an optimal treedepth decomposition of a graph $G$ on $n$ vertices running in FPT time parameterized by treedepth and in polynomial space. The most complex part of this algorithm will be a procedure \texttt{CountBdTdDecomps} that given a a graph $G$ with its treedepth decomposition and an integer $k$, counts the number of treedepth decompositions of $G$ of depth at most $k$.
% 
% This main result is made more precise by the following statement.
% 
% \begin{lemma} \label{ziom}
% 	There exists an algorithm that, given a graph $G$ on $n$ vertices, runs in time $2^{\Oh(\td(G)^2)} \cdot n^{\Oh(1)}$ and space $n^{\Oh(1)}$, and outputs an optimal treedepth decomposition of $G$.
% \end{lemma}

\subsection{Description of $\LICZ$}

As mentioned above, procedure $\LICZ$ counts the number of different elimination trees of $G$ of depth at most $d$. However, we will not count all of them, but only such that are in some sense minimal; a precise formulation will follow later. We remark that this part is inspired by the $3^d\cdot n^{\Oh(1)}$-time polynomial space algorithm of Pilipczuk and Wrochna~\cite{PiWrochna} for counting dominating sets in a graph of bounded treedepth. This algorithm exploits the same underlying trick --- sometimes dubbed ``inclusion-exclusion branching'' --- but the application here is technically more involved than in~\cite{PiWrochna}.

% 
% For a rooted tree $K$, a vertex $v\in V(K)$ and a positive integer $p$, by $K[v,v_1,v_2,\ldots,v_p]$ we denote the tree obtained from $K$ by adding a path $v_1-\ldots-v_p$ below $v$: we add vertices $v_1,\ldots,v_p$ so that $v$ is the parent of $v_1$ and $v_i$ is $v_i$ is the parent of $v_{i+1}$, for $i\in \{1, \ldots, p-1\}$.

%	We say that a mapping $\phi \colon V(T) \to V(R)$ is \textit{sensible} if $Anc(T, u, v) \Rightarrow \phi(u) \neq \phi(v)$.

Before describing $\LICZ$, let us carefully define objects that we are going to count.
We start by recalling the following standard fact about the existence of elimination forests with basic connectivity properties. 

\begin{lemma} \label{ez}
	Let $H$ be a graph and let $R$ be an elimination forest of $H$. Then there exists an elimination forest $R'$ of $H$ such that
	\begin{itemize}[nosep]
		\item for every vertex $u$ of $H$, we have $\depth_{R'}(u)\leq \depth_R(u)$; and
		\item whenever vertices $u,v\in V(H)$ belong to the same connected component of $R'$, they also belong to the same connected component of $H$.
	\end{itemize}
\end{lemma}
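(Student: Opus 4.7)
The plan is to build $R'$ by processing each connected component of $H$ independently, constructing a new rooted tree on the vertex set of that component, and then taking the disjoint union of these trees. The whole construction will reuse the ancestor structure already provided by $R$, but restricted to the vertex set of a single component.

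First I would fix a connected component $C$ of $H$ and pick $r_C \in V(C)$ minimizing $\depth_R(r_C)$. The first substantive step is to show that every $v \in V(C)$ lies in $\tree_R[r_C]$. I would argue this by induction along a path $r_C = w_0, w_1, \ldots, w_k = v$ in $H$. At each step, $w_{i-1}w_i \in E(H)$ forces $\Anc_R(w_{i-1}, w_i)$; if $w_i \in \tree_R[w_{i-1}]$ we are immediately done by the inductive hypothesis, and if instead $w_i \in \tail_R[w_{i-1}]$ then $w_i$ lies on the root-to-$w_{i-1}$ path in $R$, so it is either in $\tree_R[r_C]$ or a strict ancestor of $r_C$; the latter is excluded because $w_i \in C$ and $\depth_R(r_C)$ is minimal among $C$-vertices. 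This gives $r_C$ as a common $R$-ancestor of all of $V(C)$.

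Next I would define a tree $T_C$ on vertex set $V(C)$ by declaring $r_C$ to be its root and, for every other $u \in V(C)$, setting the parent of $u$ in $T_C$ to be the deepest strict $R$-ancestor of $u$ that lies in $V(C)$; the previous step guarantees that at worst $r_C$ itself is such an ancestor. The key observation to record is that the ancestor relation in $T_C$ coincides with the restriction of $\Anc_R$ to $V(C)$: walking upward in $T_C$ from any $u$ enumerates exactly the $V(C)$-ancestors of $u$ in $R$, in the same order. From this both required properties drop out immediately. Any edge $uv$ of $H[C]$ satisfies $\Anc_R(u,v)$ and therefore $\Anc_{T_C}(u,v)$, so $T_C$ is an elimination tree of $H[C]$; and $\depth_{T_C}(u)$ counts only the $R$-ancestors of $u$ that belong to $V(C)$, hence $\depth_{T_C}(u) \leq \depth_R(u)$.

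Finally, I would let $R'$ be the disjoint union of the trees $T_C$ over all connected components $C$ of $H$. Since every edge of $H$ has both endpoints in the same component, $R'$ is an elimination forest of $H$; the depth inequality holds vertex-wise by the previous paragraph; and the connected components of $R'$ are exactly the vertex sets of the components of $H$, which yields the second bullet for free. I do not expect a real obstacle beyond the inductive argument establishing $r_C$ as a common ancestor of $V(C)$ in $R$; everything after that step is bookkeeping.
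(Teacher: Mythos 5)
Your proof is correct and follows essentially the same route as the paper: for each connected component $C$ of $H$ you build the rooted tree on $V(C)$ whose ancestor relation is the restriction of $\Anc_R$ to $V(C)$, and take the disjoint union over components. The paper states this construction in one line and calls the verification straightforward; your write-up simply fills in those details (the minimal-depth root being a common ancestor, and the parent-of-$u$ description), so there is nothing substantively different.
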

\begin{proof}
    For every connected component $C$ of $H$, let $R_C$ be the rooted tree on vertex set $V(C)$ with the ancestor relation inherited from $R$: for $u,v\in V(C)$, $u$ is an ancestor of $v$ in $R_C$ if and only if $u$ is an ancestor of $v$ in $R$. Let $R'$ be the disjoint union of trees $R_C$ over all connected components $C$ of $H$. It is straightforward to check that $R'$ constructed in this way is an elimination forest of $H$ that satisfies both the asserted properties.
\end{proof}

We remark that computing $R'$ can be easily done in linear time by using depth-first search from the root of each elimination tree in $R$. This procedure will be used many times throughout the algorithm when justifying the usual assumption that our current graph is connected. (Disconnected graphs will often naturally appear when recursing after performing some deletions in the original graph.) 

The following lemma can be proved using a very similar, though a bit more involved reasoning. Recall that we work with a fixed connected graph $G$ and its elimination tree~$T$. 

\begin{lemma} \label{sensible}
    Let $G$ be a connected graph of treedepth at most $d$ and $T$ be an elimination tree of $G$ (possibly of depth larger than $d$). Then there exists an elimination tree $R$ of $G$ of depth at most $d$ that satisfies the following property: for every $u\in V(G)$ and $v_1, v_2 \in \chld_T(u)$, $v_1 \neq v_2$, we have
	\begin{equation}\label{eq:bobr}\cl_R(\comp_T[v_1]) \cap \cl_{R}(\comp_T[v_2]) = \cl_{R}(\tail_T[u]).
	\end{equation}
\end{lemma}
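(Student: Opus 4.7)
The plan is to reduce the required equation to a simpler structural condition on $R$ (namely, that each $R$-subtree induces a connected subgraph of $G$), and then realise this simpler condition by rearranging any elimination tree $R^{\star}$ of $G$ of depth at most $d$ (which exists since $\td(G)\le d$) without increasing the depth. The main obstacle will be threading the depth bound through the rearrangement, which works precisely because each recursive step removes the root of the current witness tree, freeing up one layer of depth for the reorganisation inside each component.

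First, I would reformulate the equation. For all $u$ and distinct $v_1,v_2\in\chld_T(u)$, the equation $\cl_R(\comp_T[v_1])\cap\cl_R(\comp_T[v_2])=\cl_R(\tail_T[u])$ is equivalent to the statement that for every $w\in V(R)$, $\LCA_T(\tree_R[w])\in\tree_R[w]$. Indeed, expanding the closures gives the equation iff $\cl_R(\tree_T[v_1])\cap\cl_R(\tree_T[v_2])\subseteq\cl_R(\tail_T[u])$; so a violation produces some $w$ whose $\tree_R[w]$ meets both $\tree_T[v_1]$ and $\tree_T[v_2]$ while avoiding $\tail_T[u]$, and this forces $\LCA_T(\tree_R[w])=u\notin\tree_R[w]$. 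The converse is symmetric.

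Second, I would prove the structural observation: \emph{for every connected induced subgraph $H$ of $G$, $\LCA_T(V(H))\in V(H)$}. Let $u$ be this LCA and suppose $u\notin V(H)$; then $V(H)\subseteq\tree_T[u]\setminus\{u\}$ and, by definition of the LCA, $V(H)$ must meet at least two distinct child-subtrees $\tree_T[v_1],\tree_T[v_2]$ of $u$. Since $T$ is an elimination tree of $G$, no edge of $G$ joins $\tree_T[v_1]$ with $\tree_T[v_2]$ (the endpoints would be incomparable in $T$), so any path in $H$ between the two parts would have to pass through $\tail_T[u]$; but $V(H)\cap\tail_T[u]=\emptyset$, a contradiction.

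Third, I would construct $R$ recursively. At each level the inputs are a connected induced subgraph $H$ of $G$ together with a witness elimination tree $R'$ of $H$, initially $(G,R^{\star})$. I declare the root $r$ of $R'$ as the root of the output for $H$, split $H-r$ into its connected components $C_1,\ldots,C_k$, and recurse on each pair $(C_i,R'_i)$, where $R'_i$ is obtained from the induced ancestor forest $R'|_{V(C_i)}$ by applying Lemma~\ref{ez} to turn it into an elimination tree of the connected graph $C_i$. An induction on $|V(H)|$ then shows that the output is an elimination tree of $H$ of depth at most $\depth(R')$: indeed, $R'|_{V(C_i)}$ has depth at most $\depth(R')-1$ (the root $r$ is excluded from every $V(C_i)$), and neither Lemma~\ref{ez} nor the recursive step inflates depth; moreover, by construction every subtree $\tree_R[w]$ of the output is the vertex set of some connected subgraph of $G$ visited during the recursion, so $G[\tree_R[w]]$ is connected. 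Applying this with the input $(G,R^{\star})$ yields an $R$ of depth at most $d$ for which $G[\tree_R[w]]$ is connected for every $w$; the structural observation then gives $\LCA_T(\tree_R[w])\in\tree_R[w]$, which by the reformulation is precisely the required sensibility.
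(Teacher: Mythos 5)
Your proof is correct, but it takes a genuinely different route from the paper. The paper argues by an exchange/extremal argument: it picks an elimination tree $R$ of depth at most $d$ minimizing $\sum_{u}\depth_R(u)$, and shows that a violation of~\eqref{eq:bobr} produces a vertex $r$ whose subtree $\tree_R[r]$ meets both $\tree_T[v_1]$ and $\tree_T[v_2]$ while avoiding $\tail_T[u]$, hence induces a disconnected subgraph; splitting that subtree via Lemma~\ref{ez} and reattaching the pieces to the parent of $r$ strictly decreases the total depth sum, a contradiction. You instead prove the stronger, $T$-independent property that one can always find an optimum-depth elimination tree in which every subtree $\tree_R[w]$ induces a connected subgraph of $G$ (via the standard recursive ``pick the root of the witness, recurse on the components'' rearrangement), and then derive sensibility from the structural fact that for a connected induced subgraph its $T$-LCA belongs to it. The core combinatorial insight is shared --- a set meeting two child subtrees of $u$ in $T$ but avoiding $\tail_T[u]$ induces a disconnected subgraph --- but your packaging is constructive and yields sensibility with respect to \emph{every} elimination tree $T$ simultaneously, whereas the paper's argument is shorter and only repairs $R$ locally. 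Two cosmetic remarks: in your reformulation the violating $w$ need not satisfy $\LCA_T(\tree_R[w])=u$ exactly (if $\tree_R[w]$ also leaves $\comp_T[u]$ the LCA is a strict ancestor of $u$), but it still lies in $\tail_T[u]$ and hence outside $\tree_R[w]$, which is all you use; and your appeal to Lemma~\ref{ez} to obtain an elimination \emph{tree} of each connected $C_i$ relies on the construction in its proof (one tree per component) rather than its literal statement --- alternatively, the induced ancestor forest of a connected subgraph is automatically a single tree, so this step is sound either way.
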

\begin{proof}
    Let $R$ be an elimination tree of $G$ of depth at most $d$ that minimizes $\sum_{u\in V(G)}\depth_R(u)$. We claim that $R$ satisfies the required property. Assume otherwise: there are $u\in V(G)$ and distinct $v_1,v_2\in \chld_T(u)$ such that~\eqref{eq:bobr} does not hold. Since the $\cl_R(\cdot)$ operator is monotone under taking subsets, we have
    $$\cl_R(\comp_T[v_1]) \cap \cl_{R}(\comp_T[v_2]) \supseteq \cl_{R}(\tail_T[u])\cap \cl_{R}(\tail_T[u])= \cl_{R}(\tail_T[u]).$$
    So there is $r\in V(G)$ such that $r\notin  \cl_R(\tail_T[u])$, but $r \in \cl_R(\comp_T[v_1])$ and $r \in \cl_R(\comp_T[v_2])$.

    Note that since $r \in \cl_R(\comp_T[v_1]) \setminus \cl_R(\tail_T[u])$, we have $r\in \cl_R(\tree_T[v_1])$.
    So there exists $u_1 \in \tree_T[v_1]$ such that $u_1 \in \tree_R[r]$. Similarly, there exists $u_2 \in \tree_T[v_2]$ such that $u_2 \in \tree_R[r]$. As $r \notin \cl_R(\tail_T[u])$, we have $\tail_T[u] \cap \tree_R[r] = \emptyset$.
    
    Observe that since $T$ is an elimination tree of $G$, we have $N_G[\tree_T[v_1]] \subseteq \comp_T[v_1]$. This implies that $N_{G \setminus \tail_T[u]}[\tree_T[v_1]] \subseteq \tree_T[v_1]$; analogously $N_{G \setminus \tail_T[u]}[\tree_T[v_2]] \subseteq \tree_T[v_2]$.
    So in $G \setminus \tail_T[u]$, each of the sets $\tree_T[v_1]$ and $\tree_T[v_2]$ is the union of vertex sets of a collection of connected components. Note that $\tree_T[v_1]\cap \tree_T[v_2]=\emptyset$.
    
    Consider now the graph $H=G[\tree_R[r]]$. As $\tail_T[u] \cap \tree_R[r] = \emptyset$, $H$ is an induced subgraph of $G \setminus \tail_T[u]$. Further, $H$ intersects both $\tree_T[v_1]$ and $\tree_T[v_2]$, namely $u_1\in V(H)\cap \tree_T[v_1]$ and $u_2\in V(H)\cap \tree_T[v_2]$. Then the conclusion of the previous paragraph implies that $H$ is disconnected.
    
    Let $R_H$ be an elimination forest of $H$ obtained by applying Lemma~\ref{ez} to $H$ and its elimination tree inherited from $R$ and rooted at $r$. Let $R'$ be the elimination tree of $G$ obtained from $R$ by first removing all vertices of $V(H)=\tree_R[r]$, and then reintroducing them again by adding forest $R_H$ and making all roots of $R_H$ into children of the parent of $r$ in $R$. Note that $r$ is not the root of $R$, since $G$ is connected.
    %(In case $r$ is a root of $R$, the roots of $R_H$ become roots of $R'$.) $ 
    It is straightforward to check that $R'$ is still an elimination tree of $G$, and from Lemma~\ref{ez} it follows that $\depth_{R'}(w)\leq \depth_R(w)$ for each $w\in V(G)$. However, since $R[\tree_R[r]]$ has only one root --- $r$ --- while $R_H$ has at least two roots --- due to $H$ being disconnected --- it follows that $\depth_{R'}(w)<\depth_R(w)$ for at least one $w\in V(H)$. So $R'$ is an elimination tree of $G$ of depth at most $d$ in which the sum of depths of vertices is strictly smaller than in $R$. This is a contradiction with the choice of $R$.
\end{proof}

An elimination tree $R$ of a graph $G$ satisfying the conclusion of Lemma~\ref{sensible} (that is, the depth of $R$ is at most $d$ and for all $u\in V(G)$ and distinct $v_1,v_2\in \chld_T(u)$ we have~\eqref{eq:bobr}) will be called \textit{sensible} with respect to $T$. In our search for elimination trees of low depth, we will restrict attention only to trees that are sensible with respect to some fixed elimination tree $T$. Then Lemma~\ref{sensible} justifies that we may do this without losing all solutions.

% We will say that two treedepth decompositions $(R, \phi)$ and $(R', \phi')$ of $G$ are \textit{isomorphic} if and only if there exists a bijective mapping $\psi\colon V(R) \to V(R')$ such that for every pair $u,v \in V(R)$ we have that $u$ is the parent of $v$ in $R$ if and only if $\psi(u)$ is the parent of $\psi(v)$ in $R'$ and for every $u \in V(G)$ we have that $\phi'(u) = \psi(\phi(u))$.

With all ingredients introduced, we may finally precisely state the goal of this section.

\begin{lemma}\label{count}
	There exists an algorithm $\LICZ(G, T, d)$ that, given a connected graph $G$ on $n$ vertices, an elimination tree $T$ of depth $k$, and an integer $d$, runs in time $2^{\Oh(dk)}\cdot n^{\Oh(1)}$, uses $n^{\Oh(1)}$ space, and outputs the number of different elimination trees of $G$ of depth at most $d$ that are sensible with respect to $T$.
\end{lemma}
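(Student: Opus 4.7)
My plan is to implement $\LICZ$ as a recursive procedure that traverses $T$ top-down. For each node $u \in V(T)$, the recursion is parameterized by a ``profile'' $\pi$ that encodes the positioning of the vertices in $\tail_T[u]$ inside the partial elimination tree $R$ being constructed, together with the remaining depth budget $d - |\tail_T[u]|$ for attaching $\tree_T[u]$ below them. The procedure returns not a single integer but a univariate polynomial $P_u^{\pi}(x)$: each candidate elimination-tree completion of $G[\tree_T[u]]$ consistent with $\pi$ contributes $x^s$, where $s$ measures violations of the sensibility condition~\eqref{eq:bobr} incurred inside $\tree_T[u]$. The honest sensibility-respecting count is then the constant coefficient $P_u^{\pi}(0)$, summed over initial profiles that correspond to $u$ being the root of $R$.

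At a leaf of $T$ the polynomial is computed directly from the profile. At an internal node $u$ with children $v_1, \ldots, v_t$ in $T$, we first guess the placement of $u$ in $R$, extend $\pi$ to a profile for each child, and then recurse independently on each $\tree_T[v_i]$. The disjointness required by sensibility --- namely that $\cl_R(\comp_T[v_i])$ and $\cl_R(\comp_T[v_j])$ intersect only in $\cl_R(\tail_T[u])$ --- is not enforced directly. Instead, we allow the children's profiles to freely overlap outside the spine $\cl_R(\tail_T[u])$, charge one power of $x$ per shared ancestor slot, and multiply the children's polynomials together. Standard inclusion--exclusion identities then cause the joint polynomial to vanish on configurations using any forbidden overlap, mirroring the scheme of~\cite{PiWrochna} but in a considerably richer setting.

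The number of profiles per node is bounded by the number of ways to lay out $\tail_T[u]$ (of size at most $k$) into a depth-$d$ ancestor structure within $R$, which is $2^{\Oh(dk)}$. The recursion depth equals the depth of $T$, namely $k$, and at each level only one polynomial of degree $\Oh(d)$ and the current profile need be stored, yielding $n^{\Oh(1)}$ space. Multiplying the per-node profile count by the recursion's standard overhead gives total time $2^{\Oh(dk)} \cdot n^{\Oh(1)}$. The main obstacle I anticipate is designing the polynomial recombination so that the product-of-children scheme together with $x$-weighted inclusion--exclusion extracts exactly the sensibility-respecting completions; the profile must carry enough information about which ancestor slots are occupied for the joint polynomial over children to factor as a product of per-child polynomials, and handling the corner cases where $\tree_T[v_i]$ contributes vertices above $u$'s position in $R$ is where I expect the bulk of the technical work to reside.
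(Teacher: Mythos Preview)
Your proposal has the right overall architecture—recursion on $T$, profiles encoding the placement of $\tail_T[u]$ in the target tree $R$, polynomials in a formal variable, and a product over the children of $u$—but you have inverted the roles of sensibility and injectivity, and this inversion is fatal for the product step.

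In the paper's scheme, sensibility is kept as a \emph{hard} constraint on the objects being counted; it is precisely what makes the identity $g(u,K,\phi,A)=\prod_{v\in\chld_T(u)} f(v,K,\phi,A)$ valid. Sensibility guarantees that every connected component of $R-K$ receives images (under $\ol\phi$) from exactly one child of $u$, so a monster for $u$ decomposes canonically into a tuple of monsters, one per child, and the product is literally correct. What is relaxed and tracked by the formal variable $x$ is not sensibility but \emph{injectivity} of $\phi$: one counts sensible generalized elimination trees where $\phi$ may send several vertices of $G$ to the same node of $R$, weighted by $x$ to the number of surplus images. Surjectivity onto $V(R)\setminus V(K)$ is enforced by a separate inclusion--exclusion at the moment $u$ is placed (extend $K$ by a path and alternate over which of the new internal nodes are allowed to be hit). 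Taking the free term of the resulting polynomial then isolates the injective, hence bijective, mappings—the honest sensible elimination trees.

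Your plan does the opposite: keep injectivity and relax sensibility, hoping to recover it through $x$. But once sensibility is dropped, a subtree of $R-K$ may contain images from several children of $u$, so a completion for $u$ no longer factors into per-child pieces, and a tuple of per-child completions need not glue to a single $R$ at all. The product $\prod_i P_{v_i}^{\pi_i}(x)$ then counts tuples rather than completions, and you cannot ``charge one power of $x$ per shared ancestor slot'' after multiplying, because each $P_{v_i}^{\pi_i}$ has already summed out which slots child $v_i$ used. Enriching the profile to record the occupied-slot set, as your last paragraph contemplates, does make the overlaps visible, but then combining children requires iterating over tuples of slot-sets and hence tabulating $2^{\Theta(dk)}$ values per child simultaneously—exactly the exponential space the lemma is designed to avoid. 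The resolution is to swap the roles: enforce sensibility so that the product is honest, and let $x$ track non-injectivity instead.
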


Note here that the input to $\LICZ$ consists not only of $G$ and $d$, but also of an auxiliary elimination tree $T$ of $G$. The depth $k$ of $T$ may be, and typically will be, larger than $d$. Also, we assume that an elimination tree is represented solely by its vertex set and the ancestor relation. In particular, permuting children of a vertex yields the {\em{same}} elimination tree, which should be counted as the same object by procedure $\LICZ$.

The remainder of this section is devoted to the proof of Lemma~\ref{count}. We first need to introduce some definition.

Let us arbitrarily enumerate the vertices of $G$ as $v_1, v_2, \ldots, v_n$ in a top-down manner in~$T$. That is, whenever $v_i$ is an ancestor of $v_j$, we have $i\leq j$. Consider another rooted tree $R$ and a mapping $\phi \colon V(T) \to V(R)$. For a vertex $u$ of $T$, we call a vertex $v_i \in \tree_T(u)$ a \textit{proper surplus image} (for $u$ and $(R,\phi)$) if at least one of the following conditions holds:
\begin{itemize}[nosep]
 \item 
$\phi(v_i) \in \cl_R(\phi(\tail_T[u]))$, or
\item there exists $j$ such that $j<i$, $v_j \in \tree_T(u)$, and $\phi(v_j) = \phi(v_i)$.
\end{itemize} 
We define \textit{non-proper surplus images} analogously, but using sets $\tail_T(u)$ and $\tree_T[u]$ instead of $\tail_T[u]$ and $\tree_T(u)$, respectively.
% 
% We are going to assume that $G$ is connected. If it is not, then similarly as in the proof of Fact \ref{ez} we can transform $T$ into separate treedepth decompositions for each connected component of $G$, solve them independently and multiply the results at the end. We recall that if $G$ is connected then $T$ has to be connected as well.

%is it needed?
%At first we check if $G$ is connected. If it is not, then we determine its connected components and for each such component $C$ we compute treedepth decomposition $T_C$ restricted to vertices of $C$ with ancestor-descendant relation inherited from $T$. All such decompositions can be computed in total $\Oh(n)$ time using depth first search. After doing so, we make independent calls on each of its connected components and multiply the results. From now on, we assume that $G$ is connected. Hence $T$ is connected as well. Moreover, we can assume that vertices $u_1, u_2, \ldots, u_n$ of $T$ are ordered in the order of first visits of depth first search on $T$ started from its root (so called \textit{preorder} \cite{TODO}).

We will work in the ring of polynomials $\Z[x]$, where $x$ is a formal variable.
By an abuse of notation, we equip this ring with an operation of division by $x$ defined through equations:
\begin{gather*}
\frac{x^i}{x} =
\begin{cases}
x^{i-1} & \text{if } i \ge 1,\\
0 & \text{if } i=0
\end{cases}\\[0.3cm]
\frac{\alpha A + \beta B}{x}=\alpha\cdot \frac{A}{x} + \beta\cdot \frac{B}{x}\qquad \textrm{for all}\qquad A, B \in \Z[x]\textrm{ and }\alpha,\beta\in \Z.
\end{gather*}
Formally speaking, division by $x$ is just the unique function from $\Z[x]$ to $\Z[x]$ satisfying the two properties above.

%	Let $\Sigma = \{A, F\}$, where $A, F$ are some symbols (the reader should think of them as Allowed and Forbidden).
Even though our final goal is to count the number of elimination trees, along the way we are going to count more general objects, called \textit{generalized elimination trees}. A generalized elimination tree of a graph $H$ is a rooted tree $R$ along with a mapping $\phi \colon V(H) \to V(R)$ such that $\phi$ respects edges. Note that in particular, it may be the case that $\im(\phi) \subsetneq V(R)$ or that $\phi(u) = \phi(v)$ for some $u, v\in V(H)$. Clearly, a generalized elimination tree is an elimination tree in the usual sense if and only if $\phi$ is a bijection between $V(H)$ and $V(R)$. We shall call two generalized elimination trees $(R,\phi)$ and $(R',\phi')$ {\em{isomorphic}} if there is an isomorphism of rooted trees $\psi$ mapping $R$ to $R'$ such that $\phi'=\psi\circ \phi$. 

A generalized elimination tree $(R,\phi)$ of an induced subgraph $H$ of $G$ is {\em{sensible}} for $T$ if for every $u\in V(H)$ and distinct $v_1,v_2\in \chld_T(u)\cap V(H)$, we have $\cl_R(\phi(\comp_T[v_1]))\cap \cl_R(\phi(\comp_T[v_2]))=\cl_R(\phi(\tail_T[u]))$. Thus, this notion projects to sensibility of (standard) elimination trees when $H=G$ and $(R,\phi)$ is an elimination tree of $G$. Generalized elimination trees of induced subgraphs of $G$ that are sensible for $T$ shall be called \textit{monsters}.

%	For any vertex $u \in V(G)$, rooted tree $K$, injective mapping $\phi \colon \tail(u) \to K$ and a subset $A \subseteq V(K) \setminus \im(\phi)$, we are going to define $f(u, K, \phi, A) \in \mathbb{Z}[x]$ as $\sum_{i=0}^{n} a_i x^i$, where $a_i$ is the number of pairs of trees $R$ that are extensions of $K$, where $K = m(R, \tail(u))$ and sensible mappings $\ol{\phi} \colon (\tail(u) \cup \tree[u]) \to R$ that are extensions of $\phi$ such that $V(R) \setminus V(K) \subseteq \im(\ol{\phi}) \subseteq (V(R) \setminus V(K)) \cup A$ with $i$ non-proper surplus images. We define $g(u, K, \phi, L)$ similarly, but for $\tail[u]$, $\tree(u)$ and proper surplus images, instead of for $\tail(u)$, $\tree[u]$ and non-proper surplus images.

For a rooted tree $K$, a mapping $\phi$ with co-domain $V(K)$ is called a {\em{cover}} of $K$ if $\cl_K(\im(\phi))=V(K)$, or equivalently, every leaf of $K$ is in the image of $\phi$.
For a vertex $u \in V(G)$, rooted tree $K$ of depth at most $d$,
a subset of vertices $A \subseteq V(K)$ that contains all leaves of $K$, and a mapping $\phi \colon \tail_T(u) \to A$ that is a cover of $K$,
we define $$f(u, K, \phi, A)=\sum_{i=0}^{n} a_i x^i \in \mathbb{Z}[x],$$ where $a_i$ is the number of non-isomorphic monsters $(R, \ol{\phi})$ such that:
\begin{itemize}[nosep]
 \item $(R, \ol{\phi})$ is a generalized elimination tree of $G[\comp_T[u]]$ of depth at most $d$;
 \item $K$ is a prefix of $R$;
 \item $\ol{\phi}$ is an extension of $\phi$ satisfying
 $$V(R) \setminus V(K) \subseteq \im(\ol{\phi}) \subseteq (V(R) \setminus V(K)) \cup A;\qquad\textrm{and}$$
 \item in $\tree_T[u]$ there are exactly $i$ non-proper surplus images for $u$ and $(R,\ol{\phi})$.
\end{itemize}
Note that since $\phi$ is assumed to be a cover of $K$, and by the second and third condition, the last condition can be rephrased as follows: $$i = |\tree_T[u]| - |V(R) \setminus V(K)|.$$

We define polynomial $g(u, K, \phi, L)$ analogously, but using $\tail_T[u]$, $\tree_T(u)$, and proper surplus images, instead of $\tail_T(u)$, $\tree_T[u]$ and non-proper surplus images. That in $\tree_T(u)$ there are $i$ proper surplus images is then equivalent to $i = |\tree_T(u)| - |V(R) \setminus V(K)|$.

%It can be looked at that $f$ and $g$ count the numbers of partial monsters with some specified conditions.

Our goal now is to compute the polynomials $f(\cdot,\cdot,\cdot,\cdot)$ and $g(\cdot,\cdot,\cdot,\cdot)$ recursively over the elimination tree $T$.
It can be easily seen that if $\chld_T(u) = \emptyset$ then
\begin{equation}\label{eq:leaf}
g(u, K, \phi, A) = 
\begin{cases}
1& \text{if } \phi \text{ respects edges}, \\
0 & \text{otherwise.}
\end{cases}
\end{equation}
Indeed, $(R,\ol{\phi})=(K,\phi)$ is the only possible pair that can satisfy the last three conditions, and it is a sensible generalized elimination tree of $G[\comp_T[u]]$ if and only if $\phi$ respects edges.

First, we show how to compute polynomials $g(u,\cdot,\cdot,\cdot)$ based on the knowledge of polynomiasl $f(v,\cdot,\cdot,\cdot)$ for children $v$ of $u$.

\begin{lemma} \label{prod}
If $\chld_T(u) \neq \emptyset$, then for all relevant $u,K,\phi,A$ we have
$$g(u, K, \phi, A) = \prod_{v \in \chld_T(u)} f(v, K, \phi, A)$$
\end{lemma}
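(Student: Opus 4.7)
I would prove Lemma~\ref{prod} by exhibiting a bijection between monsters $(R,\ol\phi)$ counted by $g(u,K,\phi,A)$ and tuples $\bigl((R_v,\ol\phi_v)\bigr)_{v\in\chld_T(u)}$ in which each $(R_v,\ol\phi_v)$ is counted by $f(v,K,\phi,A)$, and such that the degree of the monomial contributed by $(R,\ol\phi)$ equals the sum of degrees contributed by the $(R_v,\ol\phi_v)$. Once this is set up, expanding the product on the right-hand side and collecting like powers of $x$ yields the claimed identity.

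For the decomposition, I would set $W_v := \ol\phi(\tree_T[v]) \setminus V(K)$, let $R_v$ be the rooted tree $R[V(K) \cup W_v]$, and put $\ol\phi_v := \ol\phi|_{\comp_T[v]}$. The crucial observation is that $\phi$ being a cover of $K$ implies $\cl_R(\phi(\tail_T[u])) = V(K)$, so sensibility of $(R,\ol\phi)$ for $T$ at $u$ forces $\cl_R(\ol\phi(\tree_T[v_1])) \cap \cl_R(\ol\phi(\tree_T[v_2])) \subseteq V(K)$ for distinct children $v_1,v_2$. Together with $V(R) \setminus V(K) \subseteq \im(\ol\phi) \setminus V(K) \subseteq \bigcup_v W_v$, this shows that the $W_v$'s partition $V(R) \setminus V(K)$, and that the $R$-path from any $w \in W_v$ to the root stays inside $V(K) \cup W_v$, so $R_v$ really is a rooted tree with $K$ as a prefix. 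Checking that $(R_v,\ol\phi_v)$ meets the remaining requirements of a monster counted by $f(v,K,\phi,A)$ — depth at most $d$, correct image range, edge-respecting, sensibility for $T$ — is then a direct case analysis that repeatedly uses the partition property and the fact that $V(R_v)$ is ancestor-closed in $R$ (so $\cl_R$ restricted to $V(R_v)$ agrees with $\cl_{R_v}$).

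The reverse direction glues a tuple $\bigl((R_v,\ol\phi_v)\bigr)_v$ along $K$, taking the disjoint union of the private parts $V(R_v) \setminus V(K)$. The only nonobvious edge-respecting check is for edges connecting $\tree_T[v_1]$ to $\tree_T[v_2]$ for distinct children, but no such edges exist since $T$ is itself an elimination tree of $G$; sensibility of the glued object follows case by case from sensibility of the individual $(R_v,\ol\phi_v)$ together with the partition property. Finally, to match degrees I would compare the two subconditions defining surplus images: the first subcondition (image lies in $V(K)$) is literally the same in both definitions, since $\phi$ covers $K$ in both settings. The second subcondition widens from $\tree_T[v]$ (non-proper side) to $\tree_T(u)$ (proper side), but any cross-subtree collision must land in $V(K)$ by the partition property and is therefore already counted by the first subcondition. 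Hence the number of proper surplus images of $u$ falling in each $\tree_T[v]$ equals the number of non-proper surplus images of $v$, and summing over $v$ gives the degree identity. I expect this final alignment — essentially a combinatorial reflection of sensibility at the level of surplus counts — to be the main technical subtlety; everything else reduces to careful but routine bookkeeping.
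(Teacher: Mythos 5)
Your proof is correct and takes essentially the same route as the paper's: a degree-preserving one-to-one correspondence between monsters counted by $g(u,K,\phi,A)$ and tuples of monsters counted by the $f(v,K,\phi,A)$, glued along $K$ in one direction and split using sensibility in the other (your sets $W_v$ are exactly the paper's distribution of the subtrees of $R-K$ among the children of $u$). The extra detail you give — notably that cross-subtree collisions must land in $V(K)$ and are thus already covered by the first clause, so proper surplus images for $u$ within $\tree_T[v]$ coincide with non-proper surplus images for $v$ — is a correct elaboration of the degree bookkeeping that the paper only asserts.
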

\begin{proof}
	Let $\chld_T(u) = \{v_1, \ldots, v_c\}$ and let $(R_1, \ol{\phi}_1), \ldots, (R_c, \ol{\phi}_c)$ be any monsters such that $(R_i, \ol{\phi}_i)$ is a monster counted in the definition of $f(v_i, K, \phi, A)$. 
	Note that $K$ is a prefix of each $R_i$, and each $\ol{\phi}_i$ is an extension of $\phi$. Therefore, we can construct a monster $(R, \ol{\phi})$ as follows:
	\begin{itemize}[nosep] 
	 \item $R$ is the union of $R_1,\ldots,R_c$ with the vertices of $K$ identified naturally;
	 \item $\ol{\phi}$ is the union of $\ol{\phi}_1,\ldots, \ol{\phi}_c$ (note that values on $K$ match).
	\end{itemize}
	That $(R,\ol{\phi})$ constructed in this manner is sensible for $T$ is easy to verify.
	Moreover, observe that every distinct tuple of monsters $(R_1, \ol{\phi}_1), \ldots, (R_c, \ol{\phi}_c)$ gives rise to a different (non-isomorphic) monster~$(R, \ol{\phi})$.
	
	On the other hand, we argue that every monster $(R, \ol{\phi})$ counted in the definition of $g(u,K,\phi,A)$ can be obtained from some monsters $(R_1, \ol{\phi}_1), \ldots, (R_c, \ol{\phi}_c)$ in the way described above. Indeed, $(R,\ol{\phi})$ is sensible for $T$, hence every subtree of $R-K$ accommodates images under $\ol{\phi}$ of vertices from only one subtree $\tree_T[v_i]$, for some $i\in \{1,\ldots,c\}$. Distributing the subtrees of $R-K$ according to the index $i$ as above naturally gives rise to monsters $(R_1, \ol{\phi}_1), \ldots, (R_c, \ol{\phi}_c)$ that are counted in the definitions of $f(v_1, K, \phi, A),\ldots,f(v_c, K, \phi, A)$, respectively.
	
	Altogether, we have shown that distinct tuples of monsters $(R_1, \ol{\phi}_1), \ldots, (R_c, \ol{\phi}_c)$ contributing to the definitions of $f(v_1, K, \phi, A),\ldots,f(v_c, K, \phi, A)$ are in one-to-one correspondence with monsters $(R,\ol{\phi})$ contributing to the definition of $g(u, K, \phi, A)$. This correspondence preserves the number of surplus vertices in the following sense: if for eah $i\in \{1,\ldots,c\}$, $\tree_T[v_i]$ has $j_i$ non-proper surplus images for $v_i$ and $(R_i, \ol{\phi}_i)$, then $\tree_T(u)$ has $j_1 + \ldots + j_c$ proper surplus images for $u$ and $(R, \ol{\phi})$. This directly implies the postulated equality of polynomials. 
\end{proof}

Let us elaborate on the intuition on what happened in Lemma~\ref{sensible}. Intuitively, we aggregated information about the children of $u$ to the information about $u$ itself. Since in the definitions of monsters we do not insist on the mappings being injective, this aggregation could have been performed by a simple product of polynomials (though, the assumption of sensibility was crucial for arguing the correctness). In a natural dynamic programming, such as the one in~\cite{Reidl}, one would need to ensure injectivity when aggregating information from the children of $u$, which would result in a dynamic programming procedure that would need to keep track of all subsets of $K$ (and thus use exponential space). Thus, relaxing injectivity here allows us to use simple multiplication of polynomials, but obviously we will eventually need to enforce injectivity. The idea is that we enforce surjectivity instead, and make sure that the size of the co-domain matches the size of the domain. In turn, surjectivity is enforced using inclusion-exclusion in the computation of polynomials $f(u,\cdot,\cdot,\cdot)$ based on polynomials $g(u,\cdot,\cdot,\cdot)$, which is the subject of the next lemma.
% 
% Note how everything up to this point has been defined in order to make such equality work. Claim \ref{sensible} has been crucial to restrict our search space to one, where parts of the tree that we grow in different subtrees are disjoint. Moreover, we allow images from different subtrees to clash on $K$ even though in the final treedepth decompositions that we want to count --- such clashes are forbidden. Roughly speaking, preventing such clashes through keeping sets of vertices from $K$ that were taken by extensions from different subtrees was the most expensive factor in the previous approach with exponential space \cite{Reidl}, both in terms of time and in terms of memory. Because of these two reasons, we are able to make completely independent calls to different subtrees and combine their results in a product fashion.
% 
% The next equality we claim is the following:
%	Any 10-year old child should easily see that:
%	\begin{equation}
%	\begin{aligned}
%	f(u, K, \phi, A) = {} & \sum_{v \in A} x \cdot g(u, K, \phi[u \to v], A \setminus \{v\} ) + \\
%	& \sum_{v \in K} \sum_{1 \le p \le k - depth(v)} \frac{1}{x^{p-1}} \\
%	&\sum_{B \subseteq \{v_1, \ldots, v_{p-1}\}} (-1)^{p-1-|B|}
%	g(u, K[v, v_1, \ldots ,v_{p}], \phi[u \to v_p], A \cup B) 
%	\end{aligned}
%	\end{equation}
\begin{lemma}  \label{kobyla}
	For all relevant $u,K,\phi,A$, we have:
\begin{equation*}
\begin{aligned}
f(u, K, \phi, A) = {} & \sum_{v \in A} x \cdot g(u, K, \phi[u \to v], A) + \\
& \sum_{w \in K} \sum_{p=1}^{d - \depth(w)} \frac{1}{x^{p-1}} \\
&\sum_{B \subseteq \{w_1, \ldots, w_{p-1}\}} (-1)^{p-1-|B|}
g(u, K[w, w_1, \ldots ,w_{p}], \phi[u \to w_p], A \cup B \cup \{w_p\}), 
\end{aligned}
\end{equation*}
where $K[w,w_1,\ldots,w_p]$ denotes the rooted tree obtained from $K$ by adding a path $[w,w_1,\ldots,w_p]$ so that $w$ is the parent of $w_1$ and each $w_i$ is the parent of $w_{i+1}$, for $i\in \{1,\ldots,p-1\}$.
\end{lemma}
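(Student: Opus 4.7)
The plan is to establish the formula by partitioning the monsters contributing to $f(u,K,\phi,A)$ according to the value $\ol{\phi}(u)$. Since $\im(\ol{\phi}) \subseteq (V(R)\setminus V(K)) \cup A$, the image $\ol{\phi}(u)$ lies either in $A$ (Case A) or in $V(R)\setminus V(K)$ (Case B), and the two cases exhaust all monsters counted by $f$.

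In Case A, for each fixed $v \in A$, I would observe that the monsters $(R, \ol{\phi})$ counted by $f$ with $\ol{\phi}(u) = v$ are literally the same pairs $(R, \ol{\phi})$ as the monsters counted by $g(u, K, \phi[u \to v], A)$: the prefix $K$, the image bounds, the sensibility condition, and the extension condition all coincide. It remains to track the exponent of $x$. Using the containment $V(R)\setminus V(K) \subseteq \im(\ol{\phi})$ from the $f$-definition together with $\im(\phi) \subseteq V(K)$, one gets $V(R)\setminus V(K) \subseteq \ol{\phi}(\tree_T[u])$, so the number of non-surplus images in $\tree_T[u]$ equals $|V(R)\setminus V(K)|$; hence the $f$-exponent is $i = |\tree_T[u]| - |V(R)\setminus V(K)|$. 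An analogous count for $g$ gives $j = |\tree_T(u)| - |V(R)\setminus V(K)|$, so $i = j + 1$. The extra ``$+1$'' is accounted for by $u$ itself being a non-proper surplus image (since $\ol{\phi}(u) = v \in V(K) = \cl_R(\ol{\phi}(\tail_T(u)))$), which produces the $x$ factor in the first sum.

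In Case B, given a monster $(R, \ol{\phi})$ with $\ol{\phi}(u) \notin V(K)$, there is a unique path $w \to w_1 \to \ldots \to w_p = \ol{\phi}(u)$ in $R$ from the deepest ancestor $w \in V(K)$ of $\ol{\phi}(u)$ down to $\ol{\phi}(u)$, with $w_1, \ldots, w_p \in V(R)\setminus V(K)$ and $p \leq d - \depth(w)$. Labelling the path vertices with formal labels $w_1, \ldots, w_p$, the tree $K' := K[w, w_1, \ldots, w_p]$ is a prefix of $R$ and the same pair $(R, \ol{\phi})$ can be viewed in the $g$-setup with prefix $K'$ and partial map $\phi[u \to w_p]$. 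The only discrepancy is that the $f$-constraint $V(R)\setminus V(K) \subseteq \im(\ol{\phi})$ translates, after removing the automatically-covered vertex $w_p$ and using $\im(\phi) \cap \{w_1,\ldots,w_{p-1}\}=\emptyset$, into the additional surjectivity requirement $\{w_1, \ldots, w_{p-1}\} \subseteq \ol{\phi}(\tree_T(u))$, which is not part of the $g$-definition.

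To encode this extra surjectivity, I would apply standard inclusion-exclusion: for each $B \subseteq \{w_1, \ldots, w_{p-1}\}$, the polynomial $g(u, K', \phi[u \to w_p], A \cup B \cup \{w_p\})$ counts exactly those monsters whose image on $\tree_T(u)$ lies within $(V(R)\setminus V(K')) \cup A \cup B \cup \{w_p\}$, i.e., avoids $\{w_1, \ldots, w_{p-1}\}\setminus B$. Summing with signs $(-1)^{p-1-|B|}$ over all $B$ isolates precisely the contribution from monsters that hit every one of $w_1, \ldots, w_{p-1}$. For the exponent, since $|V(R)\setminus V(K)| = |V(R)\setminus V(K')| + p$, the analogous counts for $f$ and $g$ satisfy $i = j - (p-1)$, so each $g$-contribution $x^j$ becomes $x^j / x^{p-1} = x^i$ under the formal division --- terms with $j < p - 1$ correctly vanish, as such monsters cannot fulfill the extra surjectivity requirement. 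Combining Cases A and B yields the claimed formula. The main obstacle I anticipate is the bookkeeping around the exponent shifts, in particular verifying the twin identities relating non-proper/proper surplus counts to $|V(R)\setminus V(K)|$ and $|V(R)\setminus V(K')|$, together with the care needed to ensure that the informal bijection between $f$-monsters and $g$-monsters respects counting up to isomorphism; the inclusion-exclusion step itself is then routine.
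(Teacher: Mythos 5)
Your proposal is correct and follows essentially the same route as the paper's proof: split the monsters by whether $\ol{\phi}(u)\in A$ or $\ol{\phi}(u)\in V(R)\setminus V(K)$, identify the first case with $x\cdot g(u,K,\phi[u\to v],A)$, and in the second case pass to the prefix $K[w,w_1,\ldots,w_p]$ determined by the path to the deepest $V(K)$-ancestor, restoring surjectivity onto $\{w_1,\ldots,w_{p-1}\}$ by inclusion--exclusion and shifting exponents via division by $x^{p-1}$. The only (harmless) difference is that you track the exponent shifts through the cardinality identities $i=|\tree_T[u]|-|V(R)\setminus V(K)|$ and $j=|\tree_T(u)|-|V(R)\setminus V(K')|$, whereas the paper argues directly about which vertices change surplus status.
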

\begin{proof}
Let $(R,\ol{\phi})$ be a monster counted in the definition of $f(u,K,\phi,A)$. Observe that $u$ is in the domain of $\ol{\phi}$, but not in the domain of $\phi$. The intuition is that extending $\phi$ by mapping $u$ to $\ol{\phi}(u)$ yields an object that is indirectly taken into account in the polynomials $g(u,\cdot,\cdot,\cdot)$, but we need to be careful that we express the contribution of $(R,\ol{\phi})$ to $f(u,K,\phi,A)$ as a combination of contributions of different monsters to different polynomials $g(u,\cdot,\cdot,\cdot)$. Let~$v=\ol{\phi}(u)$.

Consider first the case when $v\in V(K)$. Note that then we necessarily have $v\in A$. Then $(R,\ol{\phi})$ is a monster that is counted in the definition of $g(u,K,\phi[u\to v],A)$. Observe that $u$ is a non-proper surplus image for $u$ and $(R,\ol{\phi})$, but it is not a proper surplus image for $u$ and $(R,\ol{\phi})$, hence the number of proper surplus images for $u$ and $(R,\ol{\phi})$ is exactly one larger than the number of non-proper surplus images for $u$ and $(R,\phi)$. Also, every monster counted in the definition of $g(u,K,\phi[u\to v],A)$ contributes to $f(u,K,\phi,A)$ as above. This justifies the summand $\sum_{v\in A} x \cdot g(u, K, \phi[u \to v], A)$ in the formula.

Consider now the case when $v\in V(R)\setminus V(K)$. 
We need to consider various cases on how $\cl_R(\ol{\phi}(\tail[u]))$ differs from $\cl_R(\ol{\phi}(\tail(u)))$. The former can be described as the latter with a path attached, connecting $v$ with the least ancestor $w$ of $v$ that belongs to $\cl_R(\ol{\phi}(\tail(u)))$.
Let this path be $P=[w,w_1,\ldots,w_p]$, where $w_p=v$, and observe that the length of $P$, call it $p$, satisfies $p+\depth(w)\leq d$. Therefore, if we denote $K'=K[w,w_1,\ldots,w_p]$, then it the case that $(R,\ol{\phi})$ is a monster that is counted in the definition of $g(u,K',\phi[u\to w_p],A\cup \{w_1,\ldots,w_p\})$. The problem is that not every monster counted in the definition of $g(u,K',\phi[u\to w_p],A\cup \{w_1,\ldots,w_p\})$ contributes to $f(u,K,\phi,A)$, because in the definition of the latter we require that $\ol{\phi}$ is surjective onto $V(R)\setminus V(K)$.
% 
% The crucial idea of this algorithm is that we cannot explicitly keep track of the set of vertices from $K$ that will be in the image of $\ol{\phi}(\tree(u))$, because that would require exponential space. Instead, we generalize counted objects and allow images of various vertices to coincide, but we use the inclusion-exclusion principle to satisfy that we count only such extensions $\ol{\phi}$ that all new vertices that we introduce to $K$ will be contained in their image, that is, only surjective mappings. In this way, tables of exponential size are exchanged into exponential number of independent recursive calls. However, apart from surjectivity we need to guarantee injectivity as well, what we achieve through maintaining balance of surplus images, but that will be explained later.
% \todo{WN: czy wypada rozpisywac dokladniej to incl-excl?}

This issue is mitigated using the inclusion-exclusion principle. We iterate over all subsets $B\subseteq \{w_1,\ldots,w_{p-1}\}$ and take into account the contribution from $g(u,K',\phi[u\to w_p],A\cup B \cup \{w_p\})$ with sign $(-1)^{p-1-|B|}$. In this way, the only monsters that survive in the summation are those corresponding to monsters that are surjective onto $\{w_1,\ldots,w_{p-1}\}$. 

Finally, we need to be careful about properly counting surplus images through the degrees of the formal variable $x$. As argued, the only summands that survive inclusion-exclusion summation are those corresponding to monsters $(R,\ol{\phi})$ where $\{w_1,\ldots,w_{p-1}\}\subseteq \im(\ol{\phi})$; so fix such a monster. For each $j \in \{1,\ldots,p-1\}$ there is the smallest index $s(j)$ such that $v_{s(j)} \in \tree(u)$ and $\ol{\phi}(v_{s(j)}) = w_j$. Then $v_{s(j)}$ is a proper surplus image for $u$ and $(R,\ol{\phi})$, but is not a non-proper surplus image for $u$ and $(R,\ol{\phi})$. It is straightforward to check that all vertices of $ \tree[u]$ except for $v_{s(1)},\ldots,v_{s(p)}$ retain their status: they are a proper surplus image for $u$ and $(R,\ol{\phi})$ if and only if they are a non-proper surplus image for $(R,\ol{\phi})$. Hence, there are exactly $p-1$ more proper surplus images for $u$ and $(R,\ol{\phi})$ than there are non-proper surplus images  for $u$ and $(R,\ol{\phi})$. This justifies dividing the result of the inclusion-exclusion summation by ${x^{p-1}}$ and concludes the proof.
\end{proof}

%	$f(u, K, \phi, A) = \sum_{v \in A} x \cdot g(u, K, \phi[u \to v], A \setminus \{v\} ) + \\
%\sum_{v \in K} \sum_{1 \le p \le k - depth(v)} \frac{1}{x^{p-1}} 
%\sum_{B \subseteq \{v_1, \ldots, v_{p-1}\}} (-1)^{p-1-|B|}
%g(u, K[v, v_1, \ldots ,v_{p}], \phi[u \to v_p], A \cup B) $$

We need to take an additional care of how to deduce the overall number of elimination trees based on the polynomial $f(\cdot,\cdot,\cdot,\cdot)$ and $g(\cdot,\cdot,\cdot,\cdot)$. Define polynomial
$$h = \sum_{p=1}^{d} \frac{1}{x^{p-1}} \sum_{B \subseteq \{w_1, \ldots, w_{p-1}\}} (-1)^{p-1-|B|}
g(r, [w_1, \ldots, w_{p}], [r \to w_p], B \cup \{w_p\})\in \Z[x],$$
where $r$ is the root of $T$, $[w_1, \ldots, w_{p}]$ is a path on $p$ vertices rooted at $w_1$, and $[r\to w_p]$ denotes the function with domain $\{r\}$ that maps $r$ to $w_p$.

\begin{lemma} \label{init}
	The number of elimination trees of $G$ that are sensible with respect to $T$ and have depth at most $d$ is the term in $h$ standing by $x^0$.
\end{lemma}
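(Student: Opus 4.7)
My plan is to view $h$ as the ``root-level'' analogue of the polynomial $f(u, K, \phi, A)$ with $u=r$, mimicking the branching formula of Lemma~\ref{kobyla} in the case where the prefix $K$ must be built from scratch rather than extended from an existing one. Every sensible elimination tree $(R, \ol{\phi})$ of $G$ of depth at most $d$ is a monster of $G = G[\comp_T[r]]$ in which $\ol{\phi}$ is a bijection, and it canonically decomposes as: a path of length $p \in \{1,\ldots,d\}$ from the root of $R$ down to $\ol{\phi}(r)$, plus everything below. Labelling this path as $[w_1,\ldots,w_p]$ with $w_1$ the root of $R$ and $w_p = \ol{\phi}(r)$, the tree $(R, \ol{\phi})$ contributes to $g(r, [w_1,\ldots,w_p], [r \to w_p], A)$ exactly when $A \supseteq \im(\ol{\phi}) \cap V(K)$, in complete parallel to the second summand in the formula of Lemma~\ref{kobyla}.

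The first key step is to reproduce the inclusion-exclusion argument from Lemma~\ref{kobyla}. For each fixed $p$ and $K = [w_1,\ldots,w_p]$, summing $g(r, K, [r\to w_p], B \cup \{w_p\})$ against the sign $(-1)^{p-1-|B|}$ over $B \subseteq \{w_1,\ldots,w_{p-1}\}$ collapses, via the standard subset-sum identity, to the indicator $[\im(\ol{\phi}) \supseteq \{w_1,\ldots,w_{p-1}\}]$ for each monster. Combined with the built-in requirement $V(R)\setminus V(K) \subseteq \im(\ol{\phi})$ from the definition of $g$, only monsters with $\im(\ol{\phi}) = V(R)$ survive after this inclusion-exclusion.

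The second key step is counting proper surplus images. A direct double-count over $V(G)\setminus\{r\}$, split according to whether the image lies in $V(K)$ (automatic surplus via condition $(a)$) or in $V(R)\setminus V(K)$ (one non-surplus per preimage class by condition $(b)$, using surjectivity onto $V(R)\setminus V(K)$), combined with the observation that $r$ itself maps to $w_p \in V(K)$ but is excluded from $\tree_T(r)$, yields exactly $|V(G)| - |V(R)| + (p-1)$ proper surplus images. Dividing by $x^{p-1}$ and extracting the coefficient of $x^0$ thus selects monsters with $|V(R)| = |V(G)|$; together with the surjectivity obtained in the previous step, this forces $\ol{\phi}$ to be a bijection, i.e., a genuine sensible elimination tree of $G$ of depth at most $d$.

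Finally, every sensible elimination tree $(R, \ol{\phi})$ of $G$ of depth at most $d$ determines a unique path length $p$, and under the isomorphism-class convention of $g$ contributes exactly $+1$ to the coefficient of $x^0$ of the $p$-th summand of $h$ and zero to all others; summing over $p$ from $1$ to $d$ yields the claimed equality. The main obstacle I expect is the surplus-image count in the third step: conditions $(a)$ and $(b)$ in the definition of a proper surplus image can overlap, and the bookkeeping around the unique vertex $r$ (which maps into $V(K)$ via $w_p$ yet is itself excluded from $\tree_T(r)$) must be performed carefully to avoid off-by-one errors. Once that calculation is secured, the rest of the proof is essentially a specialisation of the argument of Lemma~\ref{kobyla} to the root vertex~$r$.
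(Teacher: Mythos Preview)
Your proposal is correct and follows essentially the same approach as the paper. The paper's own proof is more terse: it observes that the formula defining $h$ is precisely the formula of Lemma~\ref{kobyla} specialised to $u=r$ with empty $K$, $\phi$, and $A$, then reads off directly from the definition of $f$ that the coefficient $a_i$ of $x^i$ in $h$ counts surjective sensible generalized elimination trees with exactly $i$ non-proper surplus images, and notes that for empty $K$ having zero surplus images is equivalent to injectivity. You instead unfold the inclusion-exclusion and the surplus-image count explicitly at the root level rather than invoking Lemma~\ref{kobyla} as a black box, which amounts to re-deriving that lemma's proof in this degenerate case; the underlying mechanism is identical.
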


\begin{proof}
    By Lemma~\ref{kobyla},
	the formula can be seen as the formula for $f(r, K, \phi, A)$ for empty $K$, $\phi$, and $A$. Therefore, $h$ can be written as $h=\sum_{i=0}^{n} a_i x^i$, where $a_i$ is the number of non-isomorphic sensible generalized elimination trees $(R, \ol{\phi})$ such that $R$ has depth at most $d$, $\ol{\phi}\colon V(G)\to V(R)$ is surjective, and in $G$ there are $i$ non-proper surplus images for $r$ and $(R,\ol{\phi})$. However, since $K$ is empty, the number of surplus images is exactly the number of vertices $v_j\in V(G)$ that are mapped by $\ol{\phi}$ to the same vertex of $R$ as some other vertex of $G$ with a smaller index. Then the assertion that $\ol{\phi}$ is injective is equivalent to the assertion that the number of such surplus images is $0$. It follows that the number of non-isomorphic sensible elimination trees of $G$ of depth at most $d$ is equal to the term in $h$ that stands by $x^0$.
\end{proof}

Having established Lemmas~\ref{prod},~\ref{kobyla} and~\ref{init}, we can conclude the description of  procedure $\LICZ$. By~\ref{init}, the goal is to compute polynomial $h$ and return the coefficient standing by $x^0$. We initiate the computation using the formula for $h$, and then we use two mutually-recursive procedures to compute polynomials $f(\cdot,\cdot,\cdot,\cdot)$ and $g(\cdot,\cdot,\cdot,\cdot)$ using formulas provided by  Lemmas~\ref{prod} and~\ref{kobyla}. The base case of recursion is for a leaf of $T$, where we use formula~\eqref{eq:leaf}.

The correctness of the procedure is established by Lemmas~\ref{prod},~\ref{kobyla} and~\ref{init}.
So it remains to bound its time complexity and memory usage. It is clear that polynomials that we compute will always have degrees at most $n$. Trees $K$ relevant in the computation will never have more than $dk$ vertices, for at every recursive call the tree $K$ can grow by at most $d$ new vertices.

As the next step, we bound the numbers that can be present in the computations.
\begin{lemma} \label{bound}
	Every coefficient of $f(u, K, \phi, A)$ is an integer from the range $[0,(dk \cdot 2^d)^{|\tree_T[u]|}]$ and every coefficient of $g(u, K, \phi, A)$ is an integer from the range $[0,(dk \cdot 2^d)^{|\tree_T(u)|}]$. Hence, all integers present in the computations are at most $(dk2^d)^n$.
\end{lemma}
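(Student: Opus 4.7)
The plan is to bound both polynomials by induction on $|\tree_T[u]|$, proceeding bottom-up through~$T$. Non-negativity of coefficients is immediate from the combinatorial definitions: each coefficient of $f(u,K,\phi,A)$ and $g(u,K,\phi,A)$ counts monsters, hence is a non-negative integer. The real content is the upper bound.

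The key idea is to first bound the evaluation of each polynomial at $x=1$, which for a polynomial with non-negative coefficients dominates the $\ell_\infty$-norm of its coefficients. At $x=1$ both recurrences become tractable. Lemma~\ref{prod} collapses to the identity $g(u,K,\phi,A)(1) = \prod_{v\in\chld_T(u)} f(v,K,\phi,A)(1)$. Lemma~\ref{kobyla} evaluated at $x=1$ becomes a signed sum of $g(u,K',\phi',A')(1)$ values; applying the triangle inequality gives $f(u,K,\phi,A)(1) \leq N\cdot \max_{K',\phi',A'} g(u,K',\phi',A')(1)$, where $N$ is the total number of summands in the recurrence. A direct count yields $N \leq |A|+|V(K)|\cdot(2^d-1) \leq dk\cdot 2^d$.

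The induction is then straightforward. For the base case, let $u$ be a leaf of $T$: Equation~\eqref{eq:leaf} gives $g(u,K,\phi,A)\in\{0,1\}$, hence $g(u,K,\phi,A)(1) \leq 1 = (dk\cdot 2^d)^{|\tree_T(u)|}$, and substituting into the $f$-bound gives $f(u,K,\phi,A)(1)\leq dk\cdot 2^d = (dk\cdot 2^d)^{|\tree_T[u]|}$. For the inductive step, $u$ is non-leaf and the bounds are assumed for every strict descendant. The product identity gives $g(u,K,\phi,A)(1) \leq \prod_v (dk\cdot 2^d)^{|\tree_T[v]|} = (dk\cdot 2^d)^{|\tree_T(u)|}$, using $\sum_v |\tree_T[v]|=|\tree_T(u)|$. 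Feeding this into the $f$-bound produces $f(u,K,\phi,A)(1) \leq dk\cdot 2^d\cdot (dk\cdot 2^d)^{|\tree_T(u)|} = (dk\cdot 2^d)^{|\tree_T[u]|}$. The coefficient-wise bounds on $\|f\|_\infty$ and $\|g\|_\infty$ follow from $\|P\|_\infty \leq P(1)$ for $P$ with non-negative coefficients, and the global bound $(dk\cdot 2^d)^n$ follows by applying the result at the root of $T$.

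The main delicacy I would want to verify carefully is that Lemma~\ref{kobyla} is a signed inclusion-exclusion identity, so $f(u,K,\phi,A)$ cannot be shown non-negative from the recurrence alone --- non-negativity has to come from the combinatorial definition. The passage to $x=1$ together with the triangle inequality sidesteps this asymmetry, turning a signed identity into a clean upper bound; care is needed to separate the combinatorial argument (for non-negativity) from the algebraic one (for the magnitude).
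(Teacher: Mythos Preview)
Your proof is correct and follows essentially the same approach as the paper's: both argue by induction along $T$, use the count $|A|+|V(K)|(2^d-1)\leq dk\cdot 2^d$ for the number of summands in the recurrence of Lemma~\ref{kobyla}, and rely on the combinatorial definition (rather than the signed recurrence) for non-negativity. The only cosmetic difference is that you pass through evaluation at $x=1$ and the inequality $\|P\|_\infty\leq P(1)$, whereas the paper bounds coefficients directly; for polynomials with non-negative coefficients these are equivalent, and in fact your formulation handles the product step in Lemma~\ref{prod} more transparently.
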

\begin{proof}
	We prove this by induction on the recursion tree. The base of the induction (that is, calls of $g$ on leaves) is clear. Induction step for $g$ called on a vertex $u$ that is not a leaf is clearly following from the bounds on $f$ called on children of $u$ as $|\tree_T(u)| = \sum_{v \in \chld_T(u)} |\tree_t[v]|$. Induction step for $f$ called on a vertex $u$ follows from the fact that it is a sum of at most $|A| + |K| \cdot (2^0 + 2^1 + \ldots + 2^{d-1}) \le dk + dk \cdot (2^d-1) = dk \cdot 2^d$ calls of $g$ with coefficients from the set $\{-1, 1\}$ on the same vertex and the fact that $|\tree_T[u]| = |\tree_T(u)| + 1$. 
\end{proof}

It follows that all integers  present in the computation have bitsize bounded polynomially in $n$.

As for the memory usage, the run of the algorithm is a recursion of depth bounded by $2k$. The memory used is a stack of at most $2k$ frames for recursive calls of procedures computing polynomials $f(\cdot,\cdot,\cdot,\cdot)$ and $g(\cdot,\cdot,\cdot,\cdot)$ for relevant arguments. Each of these frames requires space polynomial in $n$, hence the total space complexity is polynomial in $n$.

As for the time complexity, each call to a procedure computing a polynomial of the form $f(u,\cdot,\cdot,\cdot)$ makes at most $dk\cdot 2^d$ recursive calls to procedures computing polynomials of the form $g(u,\cdot,\cdot,\cdot)$. In turn, each of these calls makes one call to a procedure computing a polynomial of the form $f(v,\cdot,\cdot,\cdot)$ for each child $v$ of $u$. It follows that the total number of calls to procedures computing polynomials of the form $f(u,\cdot,\cdot,\cdot)$ and $g(u,\cdot,\cdot,\cdot)$ is bounded by $2\cdot (dk\cdot 2^d)^k=2^{\Oh(dk)}$. The internal work needed in each recursive call is bounded by $2^{\Oh(d)}\cdot n^{\Oh(1)}$. As $T$ has $n$ vertices, the total time complexity is $2^{\Oh(dk)}\cdot 2^{\Oh(d)}\cdot n^{\Oh(1)}\cdot n= 2^{\Oh(dk)}\cdot n^{\Oh(1)}$, as claimed.
This concludes the proof of Lemma~\ref{count}.

We note that having designed $\LICZ(G, T, d)$, it is easy to design a similar function $\LICZLAS(G, T, d)$ that does not need an assumption of $G$ being connected and where $T$ is some elimination forest instead of an elimination tree (by using the procedure described after Lemma \ref{ez}).

% 
% As each call of $f$ branches into at most $dk^2 \cdot 2^k$
% calls of $g$ and the height of $T$ is at most $d$, we can conclude that the total number of function calls we will perform is $\Oh(d^d k^{2d} 2^{kd} \cdot n)$.
% As such, $\LICZ$ takes $d^d k^{2d} 2^{kd} \cdot n^{\Oh(1)}$ time.
% 
% A side remark is that we can check whether $\phi$ respects edges ,,on the fly'' and immediately terminate calls that will not lead to $\phi$ respecting edges. It would lead to a significantly faster execution in practice, however the complexity analysis would stay the same, so we presented the way with the cleaner exposition.

\subsection{Utilizing $\LICZ$} \label{odzyskaj}

With the description of $\LICZ$ completed, we can describe how we can utilize it in order to construct a bounded-depth elimination tree of a graph. That is, we prove the first part of Theorem~\ref{thm:main}.

First, we lift $\LICZ$ to a constructive procedure that still requires to be provided an auxiliary elimination tree of the graph.

\begin{lemma} \label{get}
	There is an algorithm $\KONSTRUUJ(G, T, d)$ that, given an $n$-vertex graph $G$, an elimination forest $T$ of $G$ of depth at most $k$, and an integer $d$, runs in time $2^{\Oh(dk)} \cdot n^{\Oh(1)}$, uses $n^{\Oh(1)}$ space, and either correctly concludes that $\td(G)>d$ or returns an elimination forest of $G$ of depth at most $d$.
\end{lemma}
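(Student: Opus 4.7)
The plan is to use $\LICZLAS$ as a decision oracle and recover an elimination forest by self-reduction. The crucial link between counting and existence is the equivalence $\LICZLAS(G,T,d) > 0$ iff $\td(G) \le d$: the ``only if'' direction is trivial, and the ``if'' direction is exactly the content of (the forest generalization of) Lemma~\ref{sensible}, which asserts that whenever an elimination forest of depth at most $d$ exists, at least one such forest is sensible with respect to $T$ and is hence counted.

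Given this equivalence, the algorithm proceeds as follows. First invoke $\LICZLAS(G,T,d)$; if the value is zero, return $\td(G) > d$. Otherwise, build the output forest $F$ top-down. For each connected component $C$ of $G$, iterate candidate roots $v \in C$ and query $\LICZLAS(G[C]-v,\, T',\, d-1)$, where $T'$ is the elimination forest of $G[C]-v$ obtained from $T$ by deleting $v$ and promoting its children to roots; note $T'$ still has depth at most $k$. As soon as some candidate $v$ yields a positive count, commit to $v$ as the root of the tree for $C$, recurse on $(G[C]-v, T', d-1)$, and hang the returned subforest under $v$. Iterating this across all components and all recursion levels assembles the final forest.

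The main technical point, and the principal obstacle I need to verify, is soundness of the self-reduction: a vertex $v\in C$ can serve as the root of some elimination tree of $G[C]$ of depth at most $d$ if and only if $\td(G[C]-v) \le d-1$, which by the equivalence above is precisely what the oracle tests. The forward direction follows by restricting the tree. For the backward direction, take any elimination forest of $G[C]-v$ of depth at most $d-1$ and attach $v$ on top with every root of that forest becoming a child of $v$; every edge of $G[C]$ incident to $v$ is handled by $v$ being the root, and every other edge lies entirely inside $G[C]-v$ and is handled there. Consequently the oracle's output pinpoints feasible roots exactly, and the construction never gets stuck as long as the initial $\LICZLAS$ call returned a positive value.

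For the complexity: we place exactly $|V(G)|\le n$ vertices in $F$, and for each placement we issue at most $n$ oracle queries, for a total of $O(n^2)$ queries. By Lemma~\ref{count} and the subsequent remark about $\LICZLAS$, each query costs $2^{\Oh(dk)} \cdot n^{\Oh(1)}$ time and $n^{\Oh(1)}$ space. The queries are executed sequentially and their working memory is released upon return, so the overall space remains $n^{\Oh(1)}$, while the cumulative running time is $2^{\Oh(dk)} \cdot n^{\Oh(1)}$, matching the bounds claimed.
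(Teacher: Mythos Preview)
Your approach is essentially identical to the paper's: both use $\LICZLAS$ as a decision oracle, invoke Lemma~\ref{sensible} to justify the equivalence between a positive count and $\td(G)\le d$, and then recover the forest top-down by testing candidate roots via self-reduction. The complexity accounting is also the same up to polynomial factors (the paper counts $dn$ oracle calls, you count $O(n^2)$; both are $n^{\Oh(1)}$).

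There is one technical slip. You define $T'$ as ``obtained from $T$ by deleting $v$ and promoting its children to roots.'' This is not in general a valid elimination forest of $G[C]-v$: if $v$ is not the root of (the restriction of $T$ to) $C$, a child $c$ of $v$ may have a neighbour $u\in\tail_T(v)$, and once $c$ heads its own tree, $u$ and $c$ lie in different trees while being adjacent. The correct operation---and the one the paper uses---is to make the children of $v$ into children of the \emph{parent} of $v$ (or into roots only when $v$ itself was a root). With this fix the depth bound $k$ is preserved and the rest of your argument goes through unchanged.
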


\begin{proof}
    By treating every connected component separately, we may assume that $G$ is connected (see the remark after Lemma \ref{ez}). Thus $T$ is an elimination tree of $G$.

	The first step of $\KONSTRUUJ(G, T, d)$ is calling $\LICZ(G, T, d)$. If this call returns $0$, we terminate $\KONSTRUUJ$ and report that $\td(G)>d$; this is correct by Lemma~\ref{sensible}. Otherwise we are sure that $\td(G)\leq d$, and we need to construct any elimination tree of depth at most $d$. In order to do so, we check, for every vertex $v \in V(G)$, whether $v$ is a feasible  candidate for the root of desired elimination tree. Note that a vertex $v$ can be the root of an elimination tree of $G$ of depth at most $d$ if and only if $\td(G-v)<d$, or equivalently, if an only if 
	the procedure $\LICZLAS(G \setminus v, T-v, d - 1)$ returns a positive value. (Here, by $T-v$ we mean the forest $T$ with $v$ removed and all former children of $v$ made into children of the parent of $v$, or to roots in case $v$ was a root.) As $\td(G)\leq d$, we know that for at least one vertex $v$, this check will return a positive outcome. Then we recursively call $\KONSTRUUJ(G-v, T-v, d - 1)$, thus obtaining an elimination forest $F'$ of $G-v$ of depth at most $d-1$, and we turn it into an elimination tree $F$ of $G$ by adding $v$ as the new root and making it the parent of all the roots of $F'$. As $F$ has depth at most $d$, it can be returned as the result of the procedure.

	That the procedure is correct is clear. As for the time and space complexity, it is easy to see that there will be at most $dn$ calls to the procedure $\LICZ$ in total, because at each level of the recursion there will be at most one invocation of $\LICZ$ per vertex of the original graph. As each of these calls uses $2^{\Oh(dk)}\cdot n^{\Oh(1)}$ time and $n^{\Oh(1)}$ space, the same complexity bounds also follow for $\KONSTRUUJ$.
\end{proof}

It remains to show how to lift the assumption of being provided an auxiliary elimination forest of bounded depth. For this we use the iterative compression technique.

\begin{proof}[Proof of the first part of Theorem~\ref{thm:main}] %labelka
    Arbitrarily enumerate the vertices of $G$ as $v_1, v_2, \ldots, v_n$. For $i\in \{1,\ldots,n\}$, let $G_i=G[\{v_1,\ldots,v_i\}]$ be the graph induced by the first $i$ vertices. For each $i=1,2,\ldots, n$ we will compute $F_i$, an elimination forest of $G_i$ of depth at most $d$. For $i=1$ this is trivial. Assume now that we have already computed $F_i$ and want to compute $F_{i+1}$. We first construct $T_{i+1}$, an elimination tree of $G_{i+1}$, by taking $F_i$, adding $v_{i+1}$, and making $v_{i+1}$ the parent of all the roots of $F_i$. Note that $T_{i+1}$ has depth at most $d+1$. We now call $\KONSTRUUJ(G_{i+1}, T_{i+1}, d)$. If this procedure concludes that $\td(G_{i+1})>d$, then this implies that $\td(G)>d$ as well, and we can terminate the algorithm and provide a negative answer. Otherwise, the procedure returns an elimination forest $F_{i+1}$ of $G_{i+1}$ of depth at most $d$, with which we can proceed. Eventually, the algorithm constructs an elimination forest $F=F_n$ of $G=G_n$ of depth at most $d$.
    
    The algorithm is clearly correct. Since every call to $\KONSTRUUJ$ is supplied with an elimination forest of depth at most $d+1$, and there are at most $n$ calls, the total time complexity is $2^{\Oh(d^2)} \cdot n^{\Oh(1)}$ and the space complexity is $n^{\Oh(1)}$, as desired.
\end{proof}

\section{Randomized linear fpt algorithm} \label{linear}
\newcommand{\Solve}{\mathtt{Solve}}

In this section we prove the second part of Theorem~\ref{thm:main}: we reduce the time and space complexities to linear in $n$ at the cost of relying on randomization. There are three main reasons why the algorithm presented in the previous section does not run in time linear in $n$.
\begin{itemize}[nosep]
 \item First, in procedure $\ODZYSKAJ$, we applied $\LICZ$ $\Oh(dn)$ times. Even if $\LICZ$ runs in time linear in $n$, this gives at least a quadratic time complexity for $\ODZYSKAJ$.
 \item Second, in the iterative compression scheme we add vertices one by one and apply procedure $\ODZYSKAJ$ $n$ times. Again, even if $\ODZYSKAJ$ runs in linear time, this gives at least a quadratic time complexity.
 \item Third, in procedure $\LICZ$ we handle polynomials of degree at most $n$ and with coefficients of bitsize bounded only polynomially in $n$. Algebraic operations on those need time polynomial in $n$.
\end{itemize}
In short, these obstacles are mitigated as follows:
\begin{itemize}[nosep]
 \item We give another implementation of $\ODZYSKAJ$ that applies a modified variant of $\LICZ$ only $d^{\Oh(d)}$ times. In essence, we sample a random coloring of the graph with $d^{\Oh(d)}$ colors, and for every color we apply a modification of $\LICZ$ that is able to pinpoint a candidate for the root of an optimum-depth elimination forest in this color, provided there is exactly one. Since the total number of candidates in a connected graph of treedepth at most $d$ is at most $d^{\Oh(d)}$~\cite{Obstructions2}, this procedure finds a candidate root with high probability.
 \item Iterative compression is replaced by a contraction scheme of Bodlaender~\cite{Bodlaender} that allows us to replace iteration with recursion, where every recursive step reduces the total number of vertices by a constant fraction, rather than peels off just one vertex.
 \item We observe that in $\LICZ$, we may care only about monomials with degrees bounded by $dk$, so the degrees are not a problem. As for coefficients, we hash them modulo a sufficiently large prime. This is another source of randomization.
\end{itemize}
We proceed to formal details.

% 
% to this. The first obstacle is that in $\ODZYSKAJ$ we are calling $\LICZ$ $\Oh(n \cdot \td(G))$ times. The second obstacle is that we are adding vertices one by one and after each such addition we call $\ODZYSKAJ$ on the graph with the newly after vertex. That amounts to calling $\LICZ$ $\Oh(n^2 \cdot \td(G))$ times in total. We are going to deal with both of these obstacles separately and reduce the total size of graphs passed to $\LICZ$ calls to $n \cdot \poly(\td(G))$. Another aspect that needs optimizing is the cost of one call of $\LICZ$.

\subsection{Optimizing the running time of $\LICZ$}
% 
% There are two main reasons, why the procedure $\LICZ$ in its current version works in $2^{\Oh(\td(G)^2)} \cdot \poly(n)$, but not in $2^{\Oh(\td(G)^2)} \cdot n$ time. The first one is that we are dealing with polynomials that could be of degrees up to $n$ and the second one is that we are dealing with integers that are up to $(dk^2 \cdot 2^k)^{n}$, hence the cost of arithmetic operations is significant as well.

We deal with monomials of high degree first.

\begin{lemma} \label{clip}
	The output of $\LICZ(G, T, d)$ does not change if we use the quotient ring $\Z[x]/(x^{dk})$ instead of $\Z[x]$.
\end{lemma}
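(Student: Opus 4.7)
The plan is to show that the constant term of the polynomial $h$, which is the quantity returned by $\LICZ$, depends only on coefficients of degree strictly less than $dk$ of every intermediate polynomial computed along the recursion, so that truncating at degree $dk$ has no effect. I would therefore trace, top-down through the recursion described by the formula for $h$ together with Lemmas~\ref{prod} and~\ref{kobyla}, exactly which degrees of each polynomial $f(u,\cdot,\cdot,\cdot)$ and $g(u,\cdot,\cdot,\cdot)$ can possibly influence the coefficient of $x^0$ in $h$.

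Let $F_\ell$ and $G_\ell$ denote the largest such degree for vertices $u$ at depth $\ell$ in $T$. Three elementary observations constrain these quantities. First, extracting the $x^0$ coefficient of $h$ requires from the defining formula the coefficients of $g(r,\cdot,\cdot,\cdot)$ up to degree $d-1$, because the division by $x^{p-1}$ for $p \leq d$ shifts that degree into degree~$0$; hence $G_1 \leq d-1$. Second, by Lemma~\ref{prod}, $g(u,\cdot,\cdot,\cdot)$ is a product of $f(v,\cdot,\cdot,\cdot)$ over children $v$ of $u$, and the coefficient of $x^j$ in a product depends only on coefficients of degree at most $j$ of each factor; hence $F_{\ell+1} \leq G_\ell$. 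Third, by Lemma~\ref{kobyla}, $f(u,\cdot,\cdot,\cdot)$ is obtained from $g(u,\cdot,\cdot,\cdot)$ (with possibly extended $K$, $\phi$, $A$) via additions, multiplications by $x$, and divisions by $x^{p-1}$ with $p \leq d$; recovering a coefficient at degree $j$ after dividing by $x^{p-1}$ requires the coefficient at degree $j + p - 1$ of the input, so $G_\ell \leq F_\ell + (d-1)$. A routine induction on $\ell$ then gives $F_\ell \leq (\ell - 1)(d-1)$ and $G_\ell \leq \ell(d-1)$. Since $T$ has depth at most $k$, every degree that can ever matter is bounded by $k(d-1) \leq dk - 1$.

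I would conclude by noting that each of the three operations used in the recursion (sum, product, and division by $x^{p-1}$) has the property that its coefficient at degree $j$ depends only on coefficients at degrees at most $j + (d-1) < dk$ of its inputs. Consequently, passing to the quotient $\Z[x]/(x^{dk})$ (so that coefficients at degrees $\geq dk$ are discarded after each operation) cannot alter any coefficient in the range $[0, dk-1]$ of any intermediate polynomial, and in particular leaves the constant term of $h$ unchanged. The only delicate point, which I expect to be the main thing to double-check, is the interaction between the quotient and the non-ring operation of dividing by $x^{p-1}$: this map does not commute with reduction modulo $x^{dk}$ on high-degree terms, so one must verify that throughout the recursion, every coefficient consulted during such a division lies at a degree at most $G_\ell \leq dk - 1$, which is exactly the content of the degree bound established above.
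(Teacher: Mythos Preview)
Your proposal is correct and follows essentially the same approach as the paper's proof: both argue that along any root-to-leaf path of the recursion there are at most $k$ divisions, each by at most $x^{d-1}$, so only coefficients of degree at most $k(d-1)<dk$ can influence the free term of $h$. The paper states this in a single paragraph (``at most $k$ calls of this type'' to the formula of Lemma~\ref{kobyla}), whereas you make the argument more explicit by tracking the bounds $F_\ell$ and $G_\ell$ by depth and noting the subtlety that division by $x^{p-1}$ does not commute with reduction modulo $x^{dk}$; this extra care is welcome but does not constitute a different method.
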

\begin{proof}
	Recall that the final output of $\LICZ$ is the {\em{free term}} of the polynomial $h$, that is, the coefficient standing by $x^0$. The only division by $x$ in the whole algorithm happens in the formula provided by Lemma~\ref{kobyla}, where we divide by $x^{p-1}$, where $p \le d$. On any path of recursive calls in our algorithm, there are at most $k$ calls of this type, hence the summands of form $x^i$ for $i>k(d-1)$ will never have any contribution to the free term in the polynomial returned at the root of the recursion. Therefore, ignoring those summands does not affect the final result of the computation.
\end{proof}

Now, we optimize the cost of arithmetic operations. To this end, we use the standard technique of performing arithmetic operations modulo a random prime.

We start by recalling the following fact \cite[Fact 29]{PiWrochna}, which is based on \cite[Theorem 4]{Primes}.
\begin{fact} \label{primes}
	There is a positive integer $L$ such that for all integers $\ell \ge L$ it holds that the product of primes strictly between $\ell$ and $2\ell$ is larger than $2^\ell$.
\end{fact}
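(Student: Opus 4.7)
The plan is to derive this as a quantitative consequence of classical prime-counting estimates. First I would introduce Chebyshev's first function $\vartheta(x) = \sum_{p\text{ prime},\ p\leq x} \ln p$ and note that the logarithm of the product of primes strictly between $\ell$ and $2\ell$ is exactly $\vartheta(2\ell - 1) - \vartheta(\ell)$. Hence, rewriting $2^\ell = e^{\ell \ln 2}$, the fact becomes equivalent to showing that, for all sufficiently large $\ell$,
\[
\vartheta(2\ell - 1) - \vartheta(\ell) \;>\; \ell \ln 2.
\]

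Next I would invoke the Prime Number Theorem in the form $\vartheta(x) = x + o(x)$. Substituting gives $\vartheta(2\ell - 1) - \vartheta(\ell) = \ell + o(\ell)$. Since $\ln 2 \approx 0.693 < 1$, the main term $\ell$ exceeds $\ell \ln 2$ by a linear margin $\ell(1 - \ln 2)$, and this dominates the sublinear error once $\ell$ is large. Taking $L$ to be any integer past which the resulting inequality is verified finishes the argument.

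The only real obstacle is that invoking PNT is non-elementary; for the bare existence assertion in the statement, this is however perfectly adequate. If one wanted a self-contained argument with an explicit value of $L$, I would substitute PNT by the classical Chebyshev estimates (refined by Rosser and Schoenfeld), namely bounds of the form $0.921\, x < \vartheta(x) < 1.106\, x$ valid for all $x$ above a concrete threshold. These yield $\vartheta(2\ell - 1) - \vartheta(\ell) > 0.921\cdot(2\ell - 1) - 1.106\,\ell > 0.736\,\ell - O(1)$, which comfortably exceeds $\ell \ln 2 \approx 0.693\,\ell$ once $\ell$ is large enough, producing an explicit $L$. Either route gives the required constant, so I would choose the PNT-based one for brevity and remark that an effective version is available.
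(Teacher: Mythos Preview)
Your argument is correct. The reduction to the Chebyshev function $\vartheta$ is the natural move, and the Prime Number Theorem in the form $\vartheta(x)=x+o(x)$ immediately gives $\vartheta(2\ell-1)-\vartheta(\ell)=\ell+o(\ell)$, which exceeds $\ell\ln 2$ for all sufficiently large $\ell$ since $\ln 2<1$. The alternative route via explicit Chebyshev--Rosser--Schoenfeld bounds is also valid and, as you note, yields an effective constant $L$.

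As for the comparison: the paper does not prove this fact at all. It is quoted verbatim as a known statement from prior work (specifically \cite[Fact~29]{PiWrochna}, itself attributed to \cite[Theorem~4]{Primes}) and used as a black box in the analysis of hashing modulo a random prime. So your write-up is not a different proof but rather a proof where the paper gives none; it would serve perfectly well as a self-contained justification if one wished to include it.
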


Let $C \ge 1$ be a sufficiently large constant, to be specified later. Let $$A = \max(L, n^5 2^{5Cd^2}).$$ Recall that by Lemma~\ref{bound}, the coefficients that appear during the computation of $\LICZ$ are upper bounded $(dk 2^d)^n$. However, due to modifications that will be explained later when we will speak about the weighted variant of procedure $\LICZ$, we will actually need to perform arithmetics on numbers as large as $(ndk2^d)^n$. As $k$ in our applications is never larger than $2d$ and as $2d \le 2^d$ for positive integers $d$, we have $(ndk2^d)^n \le (n2^{3d})^n$. Positive integers that are at most that large cannot have more than $n$ distinct prime factors in the interval $(A, 2A)$, as $A^n > (n2^{3d})^n$. However, by Fact~\ref{primes}, we know that there are at least $\frac{\log_2{2^A}}{\log_2{2A}} = \frac{A}{\log_2{A} + 1} = \Omega(n^4 2^{4Cd^2})$ primes in this interval. Since each non-zero number $x$ that appears  in the computation has no more than $n$ distinct prime factors in the interval $(A, 2A)$, it means that the probability that a (uniformly sampled) random prime from this interval divides $x$ is at most $n \cdot \Oh\left(\frac{1}{n^4 2^{4Cd^2}}\right) = \Oh\left(\frac{1}{n^3 2^{4Cd^2}}\right)$. 

Consider the procedure $\LICZ$ modified as follows: at the beginning we sample uniformly at random a prime $p\in (A,2A)$ and instead of computing every number explicitly, we work in the ring $\Z_p=\Z/(p)$ and thus only compute the remainders modulo $p$. If the number of elimination trees of $G$ of depth at most $d$ is $0$, then we are sure that this algorithm eventually obtains $0$ as well. However, if this number is nonzero, then this algorithm will obtain $0$ modulo $p$ with probability is at most $\Oh\left(\frac{1}{n^3 2^{4Cd^2}}\right)$. Note that the bitsize of $p$ is $d^{\Oh(1)} + \Oh(\log n)$, hence all arithmetic operations in $\Z_p$ can be performed in $d^{\Oh(1)}$ time in the RAM model. Hence, by working in the ring $\Z_p$ for a random prime $p\in (A,2A)$, we significantly improve the cost of arithmetic operations while sacrificing only a little in terms of the correctness. That is, testing whether the number of elimination trees of $G$ of depth at most $d$ is nonzero may result in a false negative with probability $\Oh\left(\frac{1}{n^3 2^{4Cd^2}}\right)$, so we have a Monte Carlo algorithm with one-sided error. Throughout the remaining part of this article, we are sometimes going to refer to numbers that are present in the computation of $\LICZ$ working in $\Z$ as \textit{true} numbers, as opposed to their corresponding remainders that appear in the computation where $\LICZ$ works in $\Z_m$ for some number $m$.

Let us briefly describe how we sample a random prime from the interval $(A, 2A)$. We repeat the following procedure until we find the first prime: We first uniformly sample a random integer from this interval and then we check whether it is prime using the AKS primality test~\cite{AKS}. As argued before, there are at least $\frac{A}{\log_2{A} + 1}$ primes in this interval, hence the probability of finding a prime when sampling a random number from this interval is at least $\frac{1}{\log_2{A} + 1}$. Therefore, the expected number of trials needed to sample a prime will be at most $\log_2{A} + 1$. The AKS primality test works in time $(\log{A})^{\Oh(1)}$, hence the expected work spent till discovering a prime is $(\log{A})^{\Oh(1)} = (d \log n)^{\Oh(1)} \subseteq d^{\Oh(1)} n$. So this is a Las Vegas algorithm (which obviously can be turned into a Monte Carlo algorithm by stopping it after a certain number of failed trials). Note that we draw only one random prime $p$ at the very beginning of our algorithm, and whenever we want to use a prime, we use this one.

After improving both the degrees of involved polynomials and the cost of arithmetic operations, single call of $\LICZ$ in its current version takes $2^{\Oh(dk)} n$ time. 

\subsection{Faster root recovery}

\newcommand{\imp}[2]{#1^{\langle #2\rangle}}

Having improved the running time of $\LICZ$ to linear, now we are going to improve the running time of $\ODZYSKAJ$ to linear. Recall that $\ODZYSKAJ$ in its current version iterates over all vertices $v \in V(G)$ and checks whether $\td(G - v) \le d-1$ (by calling $\LICZ$ with appropriate parameters) --- such vertices $v$ could be placed as roots of an elimination tree of $G$ of depth at most $d$. Finding any feasible root is the crucial part that needs to be optimized in order to achieve a linear running time for $\ODZYSKAJ$. The key fact we are going to use is that the number of possible roots of optimum-depth elimination forests of a connected graph is bounded in terms of the treedepth~\cite{Obstructions2, Obstructions1}.

We need a definition. 
\begin{definition}
	We say that a graph $G$ is a \emph{minimal obstruction} for treedepth $d$ if $\td(G)>d$, but $\td(G - v) \leq d$ for each $v \in V(G)$.
\end{definition}
 
 Dvořák et al.~\cite{Obstructions1} proved that every minimal obstruction for treedepth $d$ satisfies $|V(G)| \le 2^{2^{d-1}}$. This bound was later on improved by Chen et al.~\cite{Obstructions2} to $d^{\Oh(d)}$. An easy consequence of these facts is the following:

\begin{lemma} \label{roots-num}
	Suppose $G$ is a graph whose treedepth is equal to $d$. Then  there are at most $d^{\Oh(d)}$ vertices $v \in V(G)$ such that $\td(G - v) < d$.
\end{lemma}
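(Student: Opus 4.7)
The plan is to reduce the statement to the bound on the size of minimal obstructions for treedepth $d-1$ established by Chen et al.~\cite{Obstructions2}. Let $S = \{v \in V(G) : \td(G-v) < d\}$ be the set whose cardinality I want to bound.

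The first step is to observe that since $\td(G) = d > d-1$, the graph $G$ must contain, as an induced subgraph, some minimal obstruction $H$ for treedepth $d-1$. I would obtain $H$ by starting from $G$ and iteratively deleting vertices as long as the current graph still has treedepth strictly larger than $d-1$; the process halts precisely at such a minimal obstruction (in the worst case at the empty graph, which has treedepth $0$, so the process must indeed terminate with $\td > d-1$ violated, i.e., with a minimal obstruction).

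The core of the argument is the inclusion $S \subseteq V(H)$. For any $v \in V(G) \setminus V(H)$, the graph $H$ is still an induced subgraph of $G - v$, and since treedepth is monotone under taking induced subgraphs, $\td(G - v) \ge \td(H) > d - 1$, so $\td(G - v) \ge d$ and hence $v \notin S$. Combining this containment with the bound $|V(H)| \le (d-1)^{\Oh(d-1)} \le d^{\Oh(d)}$ of Chen et al.\ immediately yields $|S| \le d^{\Oh(d)}$. I do not anticipate any genuine obstacle here: the key insight is simply that a single minimal obstruction for treedepth $d-1$ already captures every feasible root of an optimum-depth elimination tree, so no structural analysis beyond the obstruction bound is needed.
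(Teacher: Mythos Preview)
Your proposal is correct and follows essentially the same approach as the paper: both extract an inclusion-wise minimal induced subgraph of treedepth $d$ (equivalently, a minimal obstruction for treedepth $d-1$), invoke the Chen et al.\ bound on its size, and observe that any vertex outside this obstruction cannot lower the treedepth upon deletion. The only cosmetic difference is that the paper phrases the extraction as taking a minimal induced subgraph with $\td = d$, while you describe it as an iterative deletion process; these are the same construction.
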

\begin{proof}
    Let $G'$ be an inclusion-wise minimal induced subgraph of $G$ satisfying $\td(G')=d$. By minimality, $G'$ is a minimal obstruction for treedepth $d-1$. So by the result of Chen et al.~\cite{Obstructions2}, $|V(G')| \in d^{\Oh(d)}$. Note that for every $v\in V(G)\setminus V(G')$ we have $\td(G-v)=d$, for in such case $G-v$ contains $G'$ as an induced subgraph and $\td(G')=d$. So, the number of vertices $v$ satisfying $\td(G-v)<d$ is bounded by $|V(G')|$, which in turn is bounded by $d^{\Oh(d)}$. 
\end{proof}

Note that any improvement in the upper bound on the sizes of obstructions entails an analogous improvement in the bound of Lemma~\ref{roots-num}. Also observe that supposing $G$ is connected, vertices $v$ satisfying $\td(G-v)<\td(G)$ are exactly those that can be placed as roots of an optimum-depth elimination tree.

As the next step, we are going to modify the procedure $\LICZ(G, T, d)$ by introducing weights. Let $G$ be a connected graph. Enumerate vertices of $G$ as $V(G) = \{v_1, \ldots, v_n\}$ and let $t_i$ be the number of elimination trees of $G$ that are sensible with respect to $T$ and in which $v_i$ is the root. Then the result of $\LICZ(G, T, d)$ can be expressed as $t_1 + t_2 + \ldots + t_n$. However, with a slight modification, we are able to compute $t_1 \mu_1 + t_2 \mu_2 + \ldots + t_n \mu_n$ for any sequence $\mu_1, \mu_2, \ldots, \mu_n\in \Z$. In order to do so, we change the formula from Lemma~\ref{kobyla} to the following:
\begin{equation*}
\begin{aligned}
f(u, K, \phi, A) = {} & \sum_{v \in A} x \cdot g(u, K, \phi[u \to v], A) \cdot \mu(u, v)+ \\
& \sum_{w \in K} \sum_{p=1}^{d - \depth(w)} \frac{1}{x^{p-1}} \\
&\sum_{B \subseteq \{w_1, \ldots, w_{p-1}\}} (-1)^{p-1-|B|}
g(u, K[w, w_1, \ldots ,w_{p}], \phi[u \to w_p], A \cup B \cup \{w_p\}), 
\end{aligned}
\end{equation*}
where 
\begin{equation*}
\mu(v_i, u) = 
\begin{cases}
\mu_i& \text{if } u \text{ is the root of } K, \\
1 & \text{otherwise.}
\end{cases}
\end{equation*}

Similarly, we adjust the formula for the polynomial $h$:

$$h = \sum_{p=1}^{d} \frac{1}{x^{p-1}} \sum_{B \subseteq \{w_1, \ldots, w_{p-1}\}} (-1)^{p-1-|B|}
g(r, [w_1, \ldots, w_{p}], [r \to w_p], B \cup \{w_p\}) \cdot \mu(r, w_p)$$
($w_p$ is the root of the path $[w_1, \ldots, w_p]$ if and only if $p=1$).

Naturally, the definition of $f(\cdot,\cdot,\cdot,\cdot)$ and $g(\cdot,\cdot,\cdot,\cdot)$ change as well. Instead of simply counting monsters in a weighted fashion so that the contribution of every monster to the sum is the product of numbers $\mu_i$ over all $v_i$-s that were mapped onto the root in the monster (the empty product is assumed to be equal to $1$). However, we already know that the contribution of each monster that is not a valid elimination tree cancels out, so only valid elimination trees remain in the final result. For these, exactly one vertex was mapped to the root of the generalized elimination tree, hence the contribution of each such elimination tree is $\mu_i$ instead of $1$, where $v_i$ is the vertex that is mapped to the root. All in all, the final result is indeed equal to $\sum_{i=1}^{n} t_i \mu_i$, as claimed.

Assume wishfully that there is exactly one vertex $v_i \in V(G)$ that could serve as the root of an elimination tree of $G$ of depth $d$; equivalently, $v_i$ is the only vertex such that $\td(G-v_i)<d$. In other words, $t_j$ is nonzero if and only if $i=j$. Note that in such case we have $i = \frac{\sum_{j=1}^{n} j \cdot t_j}{\sum_{j=1}^{n} t_j}$. The denominator of this expression is simply the number of all elimination trees of $G$ of depth at most $d$ that are sensible with respect to $T$, while the numerator is the result of the modified version of $\LICZ$ where we set $\mu_j=j$ for all $j\in [n]$. Hence, we can find~$i$ (that is: pinpoint the unique root) by dividing the outcomes of two calls to weighted $\LICZ$, instead of calling $\LICZ$ $n$ times, as we did previously. Note that such division can be performed both in $\Z$ and in $\Z_p$ for any prime $p$, unless the denominator is zero. In case of $\Z_p$, it takes $\Oh(\log p)$ arithmetic operations to compute modular inverse, which unfortunately poses a technical challenge in the time complexity analysis: if applied without care, it would lead to the increase of time complexity to $\Oh(n \log n)$ time, because we would perform a linear number of divisions in $\Z_p$. This issue will be resolved in the final time complexity analysis, so let us ignore it for now.

Next, we lift the assumption about the uniqueness of the candidate for the root of an elimination tree. There are two key ingredients here. The first one is Lemma \ref{roots-num}, which bounds the number of possible candidate roots for elimination trees of optimum depth. The second one is the color coding technique. 

Suppose $\td(G)=d$. We can do so, as we enter that part of the algorithm only if $\td(G) \le d$ and we can determine $\td(G)$ by calling $\LICZ(G, T, d')$ for $d'=1, 2, \ldots, d$ and set $d$ as the smallest value of $d'$ where it returns a nonzero value, which will be equal to $\td(G)$ (assuming we did not encounter a false negative). We can also assume that $G$ is connected as otherwise we can make a separate call on each connected component. Let $R$ be the set of vertices that are potential roots of optimum-depth elimination trees of $G$; that is, $v\in R$ if and only if $\td(G-v)<d$. Then, Lemma~\ref{roots-num} implies that $|R| \in d^{\Oh(d)}$, and obviously, we have $|R| \ge 1$. Let $B\in d^{\Oh(d)}$ be the specific bound stemming from Lemma~\ref{roots-num}. Consider a random coloring of $V(G)$ with $B$ colors, that is, a~function $C \colon V(G) \to [B]$ where each vertex is independently and uniformly mapped to a~random number from $[B]$.
We note the following: (here, $e$ is the Euler's number) \begin{lemma} \label{color-coding}
	With probability at least $\frac{1}{e}$ there is a color $c \in [B]$ such that $|R \cap C^{-1}(c)| = 1$.
\end{lemma}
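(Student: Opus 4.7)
My plan is to use a standard first-moment-style color-coding argument, together with the bound $|R|\le B$ guaranteed by our choice of $B$.

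The main idea is to fix an arbitrary vertex $v_0\in R$ (which exists since $|R|\ge 1$) and to lower bound the probability of the event that $v_0$ is the \emph{unique} vertex of $R$ mapped to its own color $C(v_0)$. This event is clearly sufficient: if it occurs, then setting $c:=C(v_0)$ we have $|R\cap C^{-1}(c)|=1$, as desired. So the plan is to estimate this probability directly and ignore contributions coming from other vertices of $R$ being isolated in their color class (which can only improve the bound).

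Conditioning on the value of $C(v_0)$, the event that $v_0$ is the unique vertex of $R$ with that color amounts to requiring each of the remaining $|R|-1$ vertices of $R$ to be colored differently from $v_0$. Since the coloring is uniform and independent across vertices, each of these $|R|-1$ vertices independently avoids the color $C(v_0)$ with probability $1-1/B$. Hence the probability in question is exactly $(1-1/B)^{|R|-1}$. Using $|R|\le B$ from Lemma~\ref{roots-num} (and the definition of $B$), we obtain
\[
\Pr[v_0 \text{ is isolated in its color}] \;\ge\; \left(1-\tfrac{1}{B}\right)^{B-1}.
\]

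The final routine step is the elementary fact that $(1-1/B)^{B-1}\ge 1/e$ for every integer $B\ge 1$; this can be checked either by verifying that the sequence is decreasing in $B$ and tends to $1/e$, or directly via $\log(1-1/B)\ge -1/(B-1)$ for $B\ge 2$ (with the case $B=1$ handled separately, where the empty product equals $1\ge 1/e$). There is no real obstacle here: the argument is essentially one inequality after conditioning, so the only thing to be careful about is the quantifier structure, i.e.\ to make clear that bounding the probability of a sufficient event (isolation of a specific fixed $v_0$) suffices for the desired lower bound on the probability that \emph{some} color~$c$ is witnessed by a unique element of~$R$.
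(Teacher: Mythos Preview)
Your proof is correct and takes essentially the same approach as the paper: fix a vertex $v_0\in R$, bound from below the probability that no other element of $R$ receives the color $C(v_0)$, obtain $(1-1/B)^{|R|-1}\ge(1-1/B)^{B-1}\ge 1/e$ via $|R|\le B$. The only cosmetic difference is that the paper rewrites $(1-1/B)^{B-1}$ as $1/\bigl(1+\tfrac{1}{B-1}\bigr)^{B-1}$ before invoking the bound by~$1/e$.
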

\begin{proof}
	Let $v$ be any vertex from $R$ (recall that $R$ is nonempty). If all other vertices from $R$ have colors different from that of $v$, then $C(v)$ is a color fulfilling the desired property. This happens with probability $$\left(1-\frac{1}{B}\right)^{|R|-1} \ge \left(1-\frac{1}{B}\right)^{B-1} = \frac{1}{\left(1 + \frac{1}{B-1}\right)^{B-1}} \ge \frac{1}{e}.\qedhere$$
\end{proof}

For each $c \in [B]$ we do the following. Create a sequence $X = (x_1, \ldots, x_n)$, where $x_i = 1$ if $C(v_i) = c$ and $x_i = 0$ otherwise, and a sequence $Y = (y_1, \ldots, y_n)$, where $y_i = i$ if $C(v_i) = c$ and $y_i = 0$ otherwise. Then, we call the modified version of $\LICZ$, where $X$ is supplied as the sequence $\mu_1, \ldots, \mu_n$, and then call it again with $Y$ instead of $X$. Similarly as in the case of unique candidates for a root from the previous paragraph, the number $i \coloneqq \frac{\sum_{j=1}^{n}t_j \cdot y_j}{\sum_{j=1}^{n}t_j \cdot x_j}$ will be the index of a possible root, provided that there exists exactly one possible root with that color. If the denominator of that expression is nonzero, $i \in C^{-1}(c)$, and $\td(G - v_i) =d-1$, then we are sure that $v_i \in R$. If we do not succeed in finding any member of $R$ for any color $c$ in this way, we repeat the procedure with a different coloring until we find one. As we execute this part of the algorithm only if $R$ is nonempty, by Lemma~\ref{color-coding}, the expected number of colorings we need to try until we discover a member of $R$ is at most~$e$.

As checking each coloring takes at most $3B \in d^{\Oh(d)}$ executions of the modified version of $\LICZ$, identifying any possible root of an optimum-depth elimination tree takes expected $2^{\Oh(d^2)} \cdot n$ time. After identifying one, we remove it from the graph, partition the remaining part into connected components (and appropriately distribute the elimination tree $T$ into elimination trees of connected components). and recurse for each connected component. After that, we connect roots of elimination trees returned from recursive calls as children of the root found on this level, obtaining an elimination tree for the whole $G$. There will be at most $d$ recursion levels and the total size of graphs on each level is at most $n$, hence the expected total work that $\LICZ$ calls will perform will be $2^{\Oh(d^2)} \cdot n$ as well. However, as mentioned before, this does not include the time needed for divisions in $\Z_p$ and we defer this analysis to a later part.

\subsection{Replacing iterative compression}

Finally, we replace the iterative compression scheme with a technique proposed by Bodlaender in his linear-time fpt algorithm to compute the treewidth of a graph~\cite{Bodlaender}. The main part of this technique was succinctly encapsulated in \cite[Lemma~2.7]{Tw5Apx}. We need a few definitions.

\begin{definition}
	For a graph $G$ and an integer $d$, the {\em{$d$-improved graph}} of $G$, denoted
	$\imp{G}{d}$, is the graph obtained from $G$ by adding an edge between every pair of vertices that are non-adjacent, but have at least $d + 1$ common neighbours of degree at most $d$ in $G$.
\end{definition}

We note the following.

\begin{lemma} \label{improved}
For every graph $G$ and integer $d$, we have $\td(G)\leq d$ if and only if $\td(\imp{G}{d})\leq d$.
\end{lemma}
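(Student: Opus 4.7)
The equivalence has one easy direction and one substantive direction. The $(\Leftarrow)$ direction is immediate: since $G$ is a subgraph of $\imp{G}{d}$, any elimination forest of $\imp{G}{d}$ is already an elimination forest of $G$, so $\td(G) \le \td(\imp{G}{d}) \le d$.

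For $(\Rightarrow)$, the plan is to fix an elimination forest $F$ of $G$ of depth at most $d$ and argue that the same $F$ is already an elimination forest of $\imp{G}{d}$. Since the edges of $\imp{G}{d}$ consist of the edges of $G$ together with the new edges $uv$ for pairs $u,v$ sharing at least $d+1$ common neighbours (of degree at most $d$) in $G$, it suffices to verify that each such new edge $uv$ is handled by $F$, i.e. that $u$ and $v$ stand in the ancestor/descendant relation $\Anc_F$.

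Suppose for contradiction that $uv$ is a new edge but $u,v$ are incomparable in $F$. Pick distinct common neighbours $w_1,\dots,w_{d+1}$ of $u,v$ in $G$. For each $i$, the edges $w_iu$ and $w_iv$ of $G$ force $\Anc_F(w_i,u)$ and $\Anc_F(w_i,v)$. A four-way case analysis on whether $w_i$ is an ancestor or a strict descendant of $u$, and independently of $v$, eliminates the three alternatives to ``$w_i$ is a common ancestor of $u$ and $v$'': the two mixed cases would force $u$ and $v$ to be comparable themselves (by transitivity of the ancestor relation), while the case in which $w_i$ is a descendant of both contradicts the fact that subtrees rooted at incomparable vertices in a forest are disjoint. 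Hence $u,v$ lie in the same tree of $F$, and each $w_i$ belongs to $\tail_F[\ell]$, where $\ell$ is their lowest common ancestor.

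Now a simple counting argument closes the proof: $\ell$ is a strict ancestor of $u$ whose depth is at most $d$, so $\depth_F(\ell) \leq d-1$ and thus $|\tail_F[\ell]| \leq d-1$. But $\tail_F[\ell]$ contains the $d+1$ distinct vertices $w_1,\dots,w_{d+1}$, a contradiction. The only genuinely delicate step is the ancestor/descendant case analysis; the remainder is bookkeeping. I note in passing that the degree-at-most-$d$ restriction on common neighbours plays no role in this combinatorial equivalence---it is imposed only to control the number of new edges introduced by the $\imp{\cdot}{d}$ operation, which matters for the algorithmic applications of the construction rather than for this lemma.
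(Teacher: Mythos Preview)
Your proof is correct and follows essentially the same approach as the paper: show that any depth-$d$ elimination forest $F$ of $G$ is already an elimination forest of $\imp{G}{d}$ by arguing that every common neighbour of an incomparable pair $u,v$ must be a common strict ancestor, and then obtain a contradiction by counting. The paper compresses your explicit four-way case analysis into the single assertion that every common neighbour lies in $\tail_F(u)\cap\tail_F(v)$ (a set of size less than $d$), but the underlying reasoning is identical; your observation that the degree bound on the common neighbours is irrelevant for this lemma is also accurate.
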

\begin{proof}
    The right-to-left implication is obvious, so we need to prove that if $\td(G)\leq d$, then $\td(\imp{G}{d})\leq d$. Let $F$ be an elimination forest of $G$ of depth at most $d$. We claim that $F$ is also an elimination forest of $\td(\imp{G}{d})$. Suppose otherwise. Then there are vertices $u,v\in V(G)$ such that $\Anc_F(u,v)$ does not hold, while $uv$ is an edge in $\imp{G}{d}$. Since $F$ is an elimination forest of $G$, $u$ and $v$ are non-adjacent in $G$ but have at least $d+1$ common neighbors. However, as $\Anc_F(u,v)$ does not hold, every common neighbor of $u$ and $v$ belongs to $\tail_F(u)\cap \tail_F(v)$, which is a set of cardinality smaller than $d$. This is a contradiction.
\end{proof}

Recall that we are given an $n$-vertex graph $G$ and we would like to construct an elimination forest of $G$ of depth at most $d$, or conclude that $\td(G)>d$. It is well-known that an $n$-vertex graph of treedepth at most $d$ has at most $dn$ edges, hence we may assume that $|E(G)|\leq dn$; otherwise we immediately provide a negative answer. In that case, as proved by Bodlaender~\cite{Bodlaender}, the $d$-improved graph $\imp{G}{d}$ can be computed in time $d^{\Oh(1)} \cdot n$ using radix sort.
We call a vertex $v$ of $G$ {\em{$d$-improved-simplicial}} if the neighbourhood $N_{\imp{G}{d}}[v]$
is a clique in $\imp{G}{d}$. Note that if in $\imp{G}{d}$ there is a clique of size at least $d+1$, then $\td(\imp{G}{d})>d$, which in turn implies that $\td(G) > d$ due to Lemma~\ref{improved}.

We now recall the aforementioned statement from~\cite{Tw5Apx}.

\begin{lemma}[Lemma 2.7 of~\cite{Tw5Apx}] \label{Bod}
There is an algorithm working in time $d^{\Oh(1)} \cdot n$ time that, given an $n$-vertex graph $G$ and an integer $d$, either
\begin{enumerate}[nosep]
\item returns a maximal matching in $G$ of cardinality at least $\frac{n}{\Oh(d^6)}$, or,
\item returns a set of at least $\frac{n}{\Oh(d^6)}$ $d$-improved-simplicial vertices, or
\item correctly concludes that the treewidth of $G$ is larger than $d$.
\end{enumerate}
\end{lemma}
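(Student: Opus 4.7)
The plan is to follow the classical Bodlaender-style argument. First, compute a greedy maximal matching $M$ in $G$ in time $\Oh(|E(G)|+n)$; if $|E(G)| > dn$ then $\tw(G) > d$ and we report case~(3), so we may assume $|E(G)| \le dn$. If $|M| \ge n/(c_1 d^6)$ for a suitably chosen constant $c_1$, return $M$ and conclude case~(1). Otherwise, the set $U := V(G)\setminus V(M)$ of unmatched vertices has $|U| \ge n - 2|M|$ and is independent in $G$ by maximality of $M$. I also compute $\imp{G}{d}$ in time $d^{\Oh(1)} n$ via the radix-sort routine from~\cite{Bodlaender}; if $\imp{G}{d}$ contains a clique of size $d+2$, we conclude that $\tw(G) \ge \tw(\imp{G}{d}) > d$, yielding case~(3).

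The remaining task is to locate $n/\Oh(d^6)$ vertices in $U$ that are $d$-improved-simplicial. I plan to partition $U$ into \emph{twin classes} $T_1, T_2, \ldots$ of vertices sharing the same $G$-neighborhood $S_j \subseteq V(M)$, using radix-sort on adjacency lists in time $d^{\Oh(1)}n$. The key observation is that if a twin class $T_j$ satisfies $|T_j| \ge d+1$ and $|S_j| \le d$, then every $v \in T_j$ is $d$-improved-simplicial: any pair $s, s' \in S_j$ has $|T_j| \ge d+1$ common neighbors in $T_j$, each of $G$-degree $|S_j| \le d$, so $ss' \in E(\imp{G}{d})$, and hence $S_j$ is a clique in $\imp{G}{d}$. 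Moreover, $v$ picks up no new neighbors in $\imp{G}{d}$, because any new neighbor would require $d+1$ common neighbors with $v$ all confined to $N_G(v) = S_j$ of size at most $d$, a contradiction. Thus $N_{\imp{G}{d}}[v] = \{v\} \cup S_j$ is a clique in $\imp{G}{d}$, as required.

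The main obstacle I anticipate is the combinatorial accounting that shows the ``good'' twin classes (those satisfying $|T_j| \ge d+1$ and $|S_j| \le d$) collectively cover at least $n/\Oh(d^6)$ vertices, unless we are already in case~(3). The ``bad'' vertices of $U$ fall into two categories: those sitting in a twin class with $|S_j| > d$ (that is, of $G$-degree greater than $d$), and those sitting in a small twin class with $|T_j| \le d$. The former group is bounded via the edge count $|E(G)| \le dn$, while the latter is bounded by the number of distinct small neighborhoods inside $V(M)$, namely at most $d \cdot \binom{2|M|}{\le d}$. Balancing these two bounds against the threshold $|M| \le n/(c_1 d^6)$, and escalating to case~(3) whenever balancing fails (via a $(d+2)$-clique obstruction in $\imp{G}{d}$ hidden inside a bad twin class), is the delicate step that produces the $d^6$ exponent. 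All ingredients---maximal matching, improved-graph computation, sorting, and clique detection---run in $d^{\Oh(1)}n$ time, giving the claimed total complexity.
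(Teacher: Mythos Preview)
The paper does not prove this lemma at all: it is quoted verbatim as Lemma~2.7 of~\cite{Tw5Apx} and used as a black box. So there is nothing to compare your argument against in this paper; the actual proof lives in the cited reference (which in turn repackages Bodlaender's original argument~\cite{Bodlaender}).

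That said, your sketch has a genuine gap in the combinatorial accounting, and it is exactly at the point you flag as ``the delicate step.'' Your bound on the number of vertices lying in small twin classes is $d\cdot\binom{2|M|}{\le d}$, which is of order $(2|M|)^d$ --- exponential in $d$, not polynomial. Even with $|M|\le n/(c_1 d^6)$ this quantity is roughly $n^d/d^{\Theta(d)}$, so for $n$ large it swamps $|U|$ completely and you cannot conclude that any good twin classes remain. The twin-class criterion you use (``$|T_j|\ge d+1$'') is sufficient for $d$-improved-simpliciality but far too restrictive to support the counting: a vertex $v\in U$ with $|N_G(v)|\le d$ is $d$-improved-simplicial as soon as every pair $a,b\in N_G(v)$ has at least $d+1$ common neighbours of degree $\le d$ anywhere in $G$, not just among the twins of $v$.

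The correct accounting, as carried out in~\cite{Tw5Apx,Bodlaender}, works instead with ``bad pairs'': a pair $a,b\in V(M)$ is sparse if it has at most $d$ common low-degree neighbours, and each sparse pair can lie in $N_G(v)$ for at most $d$ vertices $v$ of low degree. One then bounds the high-degree part of $U$ by a treewidth argument (if every vertex of an independent set $I$ has degree $\ge d+1$ into a set $X$ and $\tw(G)\le d$, then $|I|\le d|X|$, via the edge bound $|E(G[I\cup X])|\le d(|I|+|X|)$), and bounds the number of relevant sparse pairs more carefully than the crude $\binom{2|M|}{2}$. Getting all of these estimates to mesh is what produces the $d^6$; your twin-class route does not reach it.
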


%We note that if the treewidth of $G$ is larger than $d$, then also the treedepth of $G$ is larger than $d$.
With the lemma stated, we are ready to optimize the way we use $\ODZYSKAJ$ in order to construct an elimination forest of $G$. 

We define a procedure $\Solve(G, d)$ that for a graph $G$ and an integer $d$, either reports that $\td(G) > d$ or provides an elimination forest of $G$ of depth at most $d$.
If $G$ consists of a single vertex, we return it as a valid elimination forest of depth $1$, so we assume that $|V(G)|>1$ from now on. As the very first step, we check if $|E(G)| \le dn$. As argued, if this is not the case, then we report that $\td(G) > d$ and terminate. Otherwise, we apply the algorithm of Lemma~\ref{Bod} with $G$ and $d$ as an input. If it reports that $\tw(G) > d$, then this implies that also $\td(G)>d$, so this conclusion can be reported and the procedure terminated.

Next, suppose the procedure returns a matching $M$ of size at least $\frac{n}{\Oh(d^6)}$. Contract all edges of $M$, thus obtaining a new graph $G_M$ as a result. Call $\Solve(G_M, d)$. Note that if this procedure returned that $\td(G_M)>d$, then we also have $\td(G)>d$, because $G_M$ is a minor of $G$ and treedepth is monotone under taking minors. Therefore, we may assume that we have obtained an elimination forest $F'$ of $G_M$ of depth at most $d$. We can now easily transform $F'$ into an elimination forest $F''$ of $G$ of depth at most $2d$, by replacing every vertex obtained from the contraction of an edge of $M$ by the two endpoints of this edge (these two vertices are put in place of the contracted as a parent and a child). Then, we may call $\ODZYSKAJ(G, F'', d)$ to either conclude that whether $\td(G) >d$, to construct an elimination forest of $G$ of depth at most $d$.

Finally, suppose the procedure of Lemma~\ref{Bod} returned a set $A$ consisting of at least $\frac{n}{\Oh(d^6)}$ $d$-improved-simplicial vertices. We compute $\imp{G}{d}$ and call $\Solve(\imp{G}{d} \setminus A, d)$. If this call reports that $\td(\imp{G}{d} \setminus A) > d$, then by Lemma~\ref{improved} we also have $\td(G)>d$, hence we can return this conclusion and terminate the algorithm. 
Otherwise, we have an elimination forest $F'$ of $\imp{G}{d} \setminus A$ of depth at most $d$. We order $A$ arbitrarily as $v_1, \ldots, v_a$ and process these vertices one by one. We shall iteratively construct $F_0, F_1, \ldots, F_a$, where each $F_i$ is an elimination forest of $G \setminus \{v_{i+1}, \ldots, v_a\}$.
We set $F_0$ to be $F$. Now, we argue how $F_i$ can be constructed from $F_{i-1}$, for $i=1,2,\ldots,a$. Since $v_i$ is $d$-improved-simplicial in $G$, the neighbourhood in $\imp{G}{d} \setminus \{v_{i+1}, \ldots, v_a\}$ is a clique (we may assume that this clique has size smaller than $d$, for otherwise it is safe to conclude that $\td(\imp{G}{d})>d$, implying $\td(G)>d$). This implies that all the neighbors of $v_i$ in this graph lie on some root-to-leaf path in $F_{i-1}$.
We can easily see that if we take the neighbor that is the lowest in $F_{i-1}$ and attach $v_i$ as its child, what we get as a result is a valid elimination forest of $\imp{G}{d} \setminus \{v_{i+1}, \ldots, v_a\}$ and we may call it $F_{i}$.
This way, we can compute $F_a$ from $F$ in time $d^{\Oh(1)} \cdot n$, and such $F_a$ is a valid elimination forest of $\imp{G}{d}$. We claim that if the depth of $F_a$ is larger than $2d$ then $\td(G)>d$. Suppose so and take any $u$ such that $\depth_{F_a}(u) = 2d+1$. Let $u_1, u_2, \ldots, u_{2d+1}$ be the path from the root to $u$ in $F_a$, where $u_{2d+1} = u$. Note that since the depth of $F$ is at most $d$, the vertices $u_{d+1}, u_{d+2}, \ldots, u_{2d+1}$ were all added in the process of obtaining $F_a$ from $F_0$, meaning that they are all $d$-improved-simplicial in $G$ and pairwise adjacent. In particular, $u_{d+1}, u_{d+2}, \ldots, u_{2d+1}$ is a clique of size $d+1$ in $\imp{G}{d}$, implying $\td(\imp{G}{d})>d$, which in turn implies that
$\td(G) > d$; so it is safe to return this conclusion then. Otherwise, we have obtained an elimination forest $F_a$ of $G$ of depth at most $2d$. It now remains to call $\ODZYSKAJ(G, F_a, d)$ to either conclude that $\td(G) \le d$, or construct an elimination forest of $G$ of depth at most $d$.

In short, the size of our graph shrinks by a constant factor with each recursive call, hence we improve the running time by a factor of $n$. We perform more detailed analysis in the next section.

\subsection{Detailed specification and the analysis of the time and space complexity}
Throughout previous subsections we introduced a series of modifications to the deterministic algorithm from Theorem~\ref{thm:main} in order to improve the $n^{\Oh(1)}$ factor to $n$. However, as there are nontrivial dependencies between these improvements and interplays between various sources of randomness, some details were omitted. Only now that we have an overall view of modifications, we may fully specify and analyze the algorithm. In this section we assume that $n$ always denotes the number of vertices of the original input graph, while $r$ denotes the number of vertices of a graph that was passed to either $\LICZ$ or $\ODZYSKAJ$ in some recursive call.

Each call of the modified version of $\LICZ$ is computed in a ring $\Z_m$ for some number $m$. If $m$ is prime, then $\Z_m$ can be equipped with a division operation so that it becomes the field $\F_m$. We promise that it will always hold that $m \in n^{\Oh(d)}$, hence the bitsizes of all numbers present in the computation will never be larger than $\Oh(d \log n)$. Hence, additions, subtractions and multiplications on such numbers take $d^{\Oh(1)}$ time and space in the RAM model.

For the unweighted version of $\LICZ$, Lemma~\ref{bound} shows the bound of $(dk \cdot 2^d)^r$ for all numbers present in the computation when performed on a graph with $r$ vertices. However, with the introduction of weights, this bound grows into $(Wdk \cdot 2^d)^r$, where $W$ is the maximum supplied weight. After the appropriate renumeration of vertices in each recursive call, we can assume that $W \le r$, which gives a bound of $(rdk \cdot 2^d)^r$ on the numbers present in the computation. Interestingly enough, even though intermediate numbers present in the computation of weighted $\LICZ$ can be as large as $r^r$, the final result $\sum_{i=1}^{r}t_i\mu_i$ can be bounded more efficiently. Namely, we have $\sum_{i=1}^{r}t_i\mu_i\leq W \sum_{i=1}^{r} t_i$ and we already know from Lemma~\ref{bound} that $\sum_{i=1}^{r}t_i \le (dk \cdot 2^d)^r$. Hence the outcomes returned by the weighted version of $\LICZ$ are bounded by $r(dk \cdot 2^d)^r$.

We need to specify what numbers $m$ we use as moduli in $\LICZ$. On one hand, we want to use large numbers, so that probabilities of errors are small. On the other hand, we need to deal with the issue of modular division cost potentially worsening our complexity to $\Oh(n \log n)$. The idea to deal with it is to distinguish two cases based on whether $r$ is large or small. If $r$ is large, the division cost will not be larger than the cost of $\LICZ$. If $r$ is small, then the bound on the result is sufficiently small so that performing the whole computation without hashing modulo a large prime (almost) fits into the RAM model and provides a true outcome at the end.

More specifically, we distinguish two cases: 
\begin{enumerate}
	\item $r \ge \log_2{n}$
	
	In that case, we use as $m$ the random prime $p$ that we drew at the beginning from the interval $(A, 2A)$, where $A = \max(L, n^5 2^{5Cd^2})$.
	We have $\log m \in d^{\Oh(1)} + \Oh(\log n)$ and the bound we use for the running time of the call to $\LICZ$ is $2^{\Oh(d^2)} r$. As a consequence, calling a modular inverse taking $\log m \cdot d^{\Oh(1)}$ time does not worsen the time complexity, as $\log m \cdot d^{\Oh(1)} \subseteq 2^{\Oh(d^2)} r$.
	
	\item $r < \log_2{n}$
	
	In that case we use $r(dk \cdot 2^d)^r+1$ as $m$. In all our calls $k = \Oh(d)$, hence numbers of this magnitude will have bitsize $\Oh(d \log n)$, so again, arithmetic operations on them can be performed in $d^{\Oh(1)}$ time in the RAM model. As explained before, even though true numbers that would be present in the computations could hypothetically exceed the value of $m$, the final result will not, hence the result modulo $m$ is equal to the true result. In other words $(\sum_{i=1}^{r}t_i\mu_i) \mod m = \sum_{i=1}^{r}t_i\mu_i$. Because of that, the division $\frac{\sum_{i=1}^{r}t_i\cdot i}{\sum_{i=1}^{r}t_i}$ can be performed on ordinary integers instead of on their moduli, and it takes $d^{\Oh(1)}$ time instead of $\Oh(\log n)$ time. We note that if this division does not result in an integer number, we already know that the we did not succeed in finding a candidate for a root in this color and we may continue to search within other colors. We also note that there is no randomness in this case, the output of this case is always correct.
\end{enumerate}

The expected total cost of divisions in the first case is not larger than the work that $\LICZ$ performs, hence it can be bounded by $2^{\Oh(d^2)} n$. Because the expected number of $\LICZ$ calls is $d^{\Oh(d)} n$, and the expected total cost of divisions in the second case is $d^{\Oh(d)} n$ as well. As such, we conclude that the expected total cost of divisions is $2^{\Oh(d^2)} n$ too. Therefore, the expected time that one $\ODZYSKAJ$ call takes on the graph on $n$ vertices is $2^{\Oh(d^2)} n$. 

In the next step, we come back to the time and space complexity analysis of the recursive scheme that replaced iterative compression technique. As for the time complexity, in both non-trivial cases we make a single recursive call on a graph with $n(1-\frac{1}{\Oh(d^6)})$ vertices, and perform additional work taking expected $2^{\Oh(d^2)} \cdot n$ time. Hence the expected time complexity $T(n, d)$ can be bounded using recurrence $$T(n, d) \le T\left(n\left(1-\frac{1}{\Oh(d^6)}\right), d\right) + 2^{\Oh(d^2)} \cdot n.$$ As in ~\cite{Bodlaender}, this recurrence solves to $T(n,d)=2^{\Oh(d^2)}\cdot n$, because unraveling the recursion results in a geometric series. As for the space complexity, we have argued that both $\ODZYSKAJ$ and internal computation of $\Solve(G,d)$ use $d^{\Oh(1)}\cdot n$ space. Therefore, the space complexity $S(n,d)$ can be bounded using recurrence
$$S(n, d) \le S\left(n\left(1-\frac{1}{\Oh(d^6)}\right), d\right) + d^{\Oh(1)} \cdot n,$$
which again solves to $S(n,d)=d^{\Oh(1)}\cdot n$.

In order to conclude, we need to bound the error probability. We recall that the randomness stemming from color coding and drawing a random prime is of type Las Vegas, that is, there is a possibility that the algorithm runs indefinitely long, but there are no errors that this randomness introduces. By using Markov's inequality we know that there is at most $\frac{1}{n}$ chance that our algorithm takes time that is at least $n$ times longer than its expected execution time, hence with at least $\frac{n-1}{n}$ probability there will be at most $2^{\Oh(d^2)} n^2$ calls to $\LICZ$. As argued before, the errors stem only from cases where the true result of $\LICZ$ should be nonzero, but becomes zero as a result of unluckily chosen modulo $m$. The probability of that happening for a particular call is at most $\frac{1}{2^{Cd^2}n^3}$ for any constant $C$ of our choice. By using the union bound, we conclude that the probability that we never encounter any error of this type is at least $\frac{n-1}{n} - \frac{2^{\Oh(d^2)} n^2}{2^{Cd^2}n^3} \ge \frac{n-2}{n}$, for any sufficiently large $C$. We remark that the errors are of the false negative type, that is, if an elimination forest is returned, it is guaranteed to a be a valid elimination forest of depth at most $d$. This concludes the description of the procedure $\Solve(G, d)$ and the analysis of its time complexity, space complexity, and the probability of correctness.
% 
% \todo[inline]{WN: Nie jestem szczególnie zadowolony z tej ostatniej podsekcji. Pewnie warto jej jakoś zmienić strukturę, powyróżniać te przypadki lepiej, a może w ogóle i (o zgrozo!) napisać jakiś pseudokod.}
% \todo[inline]{WN: Nie jestem jeszcze pewien jak chcę przeprowadzić analizę błędów. Z jednej strony w tej ostatniej części wygodnie mi pisać, że to ODZYSKAJ już działa w liniowym czasie, ale z drugiej strony to że algorytm jeszcze ulega zmianie może wpływać na analizę błędu (choć tylko pozytywnie, ale mimo wszystko).}

% \section{Conclusions and open problems} \label{concl}
% \input{concl.tex}

\section*{Acknowledgements}
We would like to thank Marcin Mucha and Marcin Pilipczuk for discussions on the topic of this work.

\bibliographystyle{abbrv}
\bibliography{ExactTD.bib}

\end{document}